\def\calP{\mathcal{P}}
\def\calR{\mathcal{R}}
\def\calF{\mathcal{F}}
\def\calC{\mathcal{C}}
\def\st{$s$-$t$}
\def\hatd{\hat{d}}
\newcommand{\vtd}{\mbox{$V\!D$}}
\newcommand{\htd}{\mbox{$H\!D$}}
\newtheorem{observation}{Observation}
\begin{document}

\title{An Optimal Algorithm for Minimum-Link Rectilinear Paths in Triangulated Rectilinear Domains\thanks{A preliminary version of
this paper will appear in the Proceedings of the 42nd International Colloquium on Automata, Languages, and Programming (ICALP 2015).
}
}

\author{Joseph S.B. Mitchell\inst{1}
\and
Valentin Polishchuk\inst{2}
\and
Mikko Sysikaski\inst{3}
\and
Haitao Wang\inst{4}
}

\institute{
Stony Brook University, Stony Brook,  NY 11794, USA\\
\email{jsbm@ams.stonybrook.edu}\\
\and
Link\"{o}ping University, Link\"{o}ping, Sweden\\
\email{valentin.polishchuk@liu.se}\\
\and
Google, Zurich, Switzerland\\
\email{mikko.sysikaski@gmail.com}\\
\and
Utah State University, Logan, UT 84322, USA\\
\email{haitao.wang@usu.edu}\\
}

\maketitle

\pagestyle{plain}
\pagenumbering{arabic}
\setcounter{page}{1}

\vspace{-0.2in}
\begin{abstract}
We consider the problem of finding minimum-link
rectilinear paths in rectilinear polygonal domains in the plane.
A path or a polygon is rectilinear if all its edges are axis-parallel.
Given a set $\calP$ of $h$ pairwise-disjoint rectilinear polygonal obstacles
with a total of $n$ vertices in the plane, a {\em minimum-link rectilinear
path} between two points is a rectilinear path that avoids all
obstacles with the minimum number of edges.
In this paper, we present a new algorithm for finding minimum-link rectilinear paths among $\calP$.
After the plane is triangulated, with respect to any source point $s$,  our algorithm
builds an $O(n)$-size data structure in $O(n+h\log h)$ time, such
that given any query point $t$, the number of edges of a minimum-link
rectilinear path from $s$ to $t$ can be computed in
$O(\log n)$ time and the actual path can be output in additional time
linear in the number of the edges of the path. The previously best algorithm
computes such a data structure in $O(n\log n)$ time.
\end{abstract}


\section{Introduction}
\label{sec:intro}

Paths with few turns have applications in a variety of areas,
including VLSI design, computer vision,
cartography, geographical information systems, robotics, computer
graphics, image processing, and solid modeling
\cite{ref:ArkinLo95,ref:DjidjevAn92,ref:GuibasAp93,ref:KolarovOn91,ref:McMasterAu87,ref:NatarajanOn91,ref:ReifMi87,ref:SuriMi87,ref:TamassiaOn86}.
Finding paths with few turns (or
minimum-link paths) has received much attention, e.g.,
\cite{ref:AdegeestMi94,ref:ArkinLo95,ref:DasGe91,ref:deBergOn91,ref:deRezendeRe89,ref:ElGindyHi85,ref:GhoshCo91,ref:ImaiEf86,ref:LingasOp95,ref:MaheshwariLi00,ref:MitchellMi14,ref:MitchellMi92,ref:ReifMi87,ref:SchuiererAn96,ref:SuriA86,ref:SuriMi87,ref:SuriOn90}.
In this paper, we enrich the literature by presenting a new algorithm for
finding the minimum-link rectilinear paths among rectilinear polygonal
domains in the plane.

Given a set $\calP$ of $h$ pairwise-disjoint polygonal obstacles with
a total of $n$ vertices in the plane,
the plane minus the interior of all obstacles is called the {\em free space}.
The {\em link distance} of a
path is defined to be the number of edges (also called {\em links})
in the path. A {\em minimum-link path} (or {\em min-link} path)
between two points $s$ and $t$ is a path from $s$ to $t$ in the free space
with the minimum link distance.
The {\em min-link path query} problem is to construct a data structure (called
{\em link distance map}) with respect to a given {\em source} point
$s$, such that for any query point $t$, a min-link path from $s$
to $t$ can be quickly computed. In the following, we say a
link distance map has the {\em standard query performance} if given any
$t$, the link distance of a min-link \st\ path can be computed in
$O(\log n)$ time and the actual path can be output in additional time
linear in the link distance of the path.

A polygon (or path) is {\em rectilinear} if all its edges are
axis-parallel.  $\calP$ is a {\em rectilinear
polygonal domain} if every obstacle of $\calP$ is rectilinear
(e.g., see Fig.~\ref{fig:rectilinear}).
The {\em rectilinear version} of the min-link path/query problem is to
find min-link {\em rectilinear} paths in rectilinear polygonal domains.
Rectilinear polygons are commonly used as
approximations to arbitrary simply polygons; and they arise naturally
in domains dominated by Cartesian coordinates, such as raster
graphics, VLSI design (as mentioned earlier), robotics, or architecture.

\begin{figure}[t]
\begin{minipage}[t]{\linewidth}
\begin{center}
\includegraphics[totalheight=1.1in]{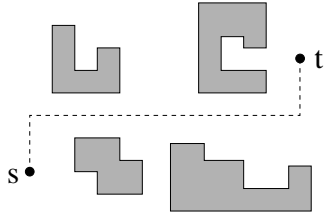}
\caption{\footnotesize Illustrating a rectilinear min-link path in
a rectilinear polygonal domain.}
\label{fig:rectilinear}
\end{center}
\end{minipage}
\vspace*{-0.15in}
\end{figure}

\subsection{Previous Work}

Linear-time algorithms have been given for finding min-link paths in simple polygons
\cite{ref:GhoshCo91,ref:HershbergerCo94,ref:SuriA86,ref:SuriMi87,ref:SuriOn90}.
The link distance map can also be built in
linear time \cite{ref:SuriA86,ref:SuriMi87,ref:SuriOn90} for simple
polygons, with the standard query performance.
For polygonal domains, the problem becomes much more
difficult.
Mitchell, Rote, and Woeginger \cite{ref:MitchellMi92}
gave an $O(n^2\alpha(n)\log^2n)$ time
algorithm for finding min-link paths, where $\alpha(n)$
is the inverse Ackermann function; a link distance map with slightly more
construction time is also given in \cite{ref:MitchellMi92}.
As shown by Mitchell, Polishchuk, and Sysikaski
\cite{ref:MitchellMi14}, finding
min-link paths in polygonal domains is 3SUM-hard.

The rectilinear min-link path problems have also been studied.
For simple rectilinear polygons, de Berg \cite{ref:deBergOn91}
presented an algorithm that can build an $O(n)$-size
link distance map in $O(n\log n)$ time and $O(n)$ space, with the standard query
performance. The construction time was later reduced
to $O(n)$ time by Lingas, Maheshwari, and Sack
\cite{ref:LingasOp95}, and by Schuierer \cite{ref:SchuiererAn96}.

For rectilinear polygonal domains, Imai and
Asano \cite{ref:ImaiEf86} presented an $O(n\log n)$  time and space
algorithm for finding min-link rectilinear paths.
Later, Das and Narasimhan \cite{ref:DasGe91} described an
improved algorithm of $O(n\log n)$ time and $O(n)$ space; Sato,
Sakanaka, and Ohtsuki \cite{ref:SatoA87} gave a similar algorithm with the same
performance.
Recently, Mitchell, Polishchuk, and Sysikaski \cite{ref:MitchellMi14}
presented a simpler algorithm of $O(n\log n)$ time and $O(n)$ space.
Link distance maps of $O(n)$-size can also be built in $O(n\log n)$
time and $O(n)$ space \cite{ref:DasGe91,ref:MitchellMi14}.
As shown in \cite{ref:DasGe91,ref:MaheshwariLi00}, the problem of finding min-link
rectilinear paths in rectilinear polygonal domains has a lower bound
of $\Omega(n+ h\log h)$ on the running time. Therefore, the algorithms in
\cite{ref:DasGe91,ref:MitchellMi14,ref:SatoA87} are optimal only if
$h=\Theta(n)$.

However, since the value $h$ can be substantially smaller than
$n$, it is desirable to have an
algorithm whose running time is bounded by $O(n+f(h))$, where $f(h)$
is a function of $h$.

In addition, the $C$-oriented version of the min-link paths problems have also been
considered \cite{ref:AdegeestMi94,ref:HershbergerCo94,ref:MitchellMi14},
where the edges of the paths are allowed to have $C$
different directions. Our rectilinear version is essentially a special
case of the $C$-oriented version with $C=2$.
	

\subsection{Our Results}
We consider the rectilinear min-link paths in a rectilinear domain $\calP$.
After the free space of $\calP$ is triangulated,
our algorithm builds a link distance map in
$O(n+h\log h)$ time and $O(n)$ space, with the standard query performance.
The triangulation can be done
in $O(n\log n)$ time or $O(n+h\log^{1+\epsilon}h)$
time for any $\epsilon>0$ \cite{ref:Bar-YehudaTr94}.
Hence, our result is an improvement over the previous $O(n\log n)$
time algorithms \cite{ref:DasGe91,ref:MitchellMi14,ref:SatoA87}, especially
 when $h$ is substantially smaller than $n$, e.g., if $h=O(n^{1-\delta})$ for any
$\delta>0$, our algorithm runs in $O(n)$ time.

\subsection{Algorithm Overview}
Our idea is to combine Das and
Narasimhan's algorithmic scheme
\cite{ref:DasGe91} and a {\em corridor structure} of polygonal
domains.  The corridor structure and its extensions have been used to solve
shortest path problems, e.g.,
\cite{ref:ChenTw14,ref:ChenA11ESA,ref:KapoorAn97,ref:MitchellSe95};
however, to the best of our knowledge, it was not used to tackle
min-link path problems. The corridor structure partitions the free
space of $\calP$ into $O(h)$ corridors and $O(h)$ ``junction''
rectangles that connect all corridors.

The algorithm in \cite{ref:DasGe91} (which we call the DN algorithm) sweeps the free space, from the source point $s$, to build the map.
The sweep is controlled in a global way so that the time is bounded by
$O(n\log n)$. This global sweeping on the entire free space restricts the DN algorithm from being implemented in $O(n+h\log h)$ time because each operation takes $O(\log n)$ time and there are $O(n)$ operations. Using the corridor structure, our algorithm avoids the global sweeping on the entire free space.
When the sweep is in junction rectangles, we control the sweep in a global way as
in the DN algorithm. However, when the sweep enters a corridor,
we process the corridor independently and ``locally'' without considering the
space outside the corridor. Since a
corridor is a simple polygon, we are able to design a
faster algorithm for processing the sweep in it.
When we finish processing a corridor, we arrive at a junction rectangle.
Next, we pick an unprocessed junction
rectangle that currently has the smallest link distance to $s$ to
``resume'' the sweep. This is somewhat similar to Dijkstra's shortest path algorithm.
In this way, there are only $O(h)$ operations that need to be performed in logarithmic time each.

To achieve the $O(n+h\log h)$ time bound, we have to implement the algorithm in a very careful manner. For example, we need an efficient
algorithm to compute link distance maps in corridors (i.e.,
simple rectilinear polygons).
Although efficient algorithms are available for computing the link
distance maps in simple polygons, e.g.,
\cite{ref:deBergOn91,ref:HershbergerCo94,ref:LingasOp95,ref:SchuiererAn96},
they are not suitable for our main algorithmic scheme, which requires
an algorithm with some special properties. Specifically, let
$\calC$ be a corridor. Suppose there are $h_1$
pairwise-disjoint segments on a vertical edge $d_1$ (called a
``door'') of $\calC$ and these segments are considered as ``light sources'', stored in a balanced binary search tree.  We
want to build a link distance map in $\calC$, and obtain a balanced
binary search tree storing the light sources (let $h_2$ be their number)
going out of $\calC$ through another vertical edge $d_2$
(another ``door'') of $\calC$. Our goal is to design
a ``corridor-processing'' algorithm for the above problem with running time
$O(m+(h_1-h_2+1)\log h_1)$, where $m$ is the number of vertices of $\calC$. We derive such an
algorithm, which might be of independent interest.


The rest of the paper is organized as follows.
In Section \ref{sec:pre}, we define notation and
review the DN algorithm \cite{ref:DasGe91}.
In Section \ref{sec:mainscheme}, we introduce the corridor structure
and present the main scheme of our algorithm while leaving our
algorithms for processing corridors in Section \ref{sec:algocorridor}.
Section \ref{sec:conclude} concludes the paper.

\section{Preliminaries}
\label{sec:pre}

For simplicity of discussion,
let $\calR$ be a large rectangle that contains all obstacles of
$\calP$ and let $\calF$ denote the free space of $\calP$ in $\calR$. Note that our algorithm can handle the case where $\calR$ is an arbitrary rectilinear polygon, but for simplicity of discussion, we consider the case where $\calR$ is a rectangle.
We assume $\calF$ has been triangulated. Let $s$ be a given source
point in $\calF$.  For ease of exposition, we make a general position
assumption that no three vertices of $\calP\cup\{s\}$ have
the same $x$- or $y$-coordinate; the assumption can be lifted without
affecting the performance of the algorithm asymptotically although
more tedious discussions would be needed.
In the following, paths always refer to
rectilinear paths in $\calF$. For any point $t$ in
$\calF$, a path from $s$ to $t$ is also referred to as an $s$-$t$
path. An edge of a path is also called a {\em link} of the path.

Consider any point $t\in \calF$.
An \st\ path $\pi$ is called a {\em horizontal-start-vertical-end}
path (or {\em h-v-path} for short)
if the first link of $\pi$ (i.e., the edge incident to $s$) is
horizontal and the last link of  (i.e., the edge incident to $t$) is
vertical. The {\em h-h-paths}, {\em v-h-paths}, and {\em v-v-paths} are
defined analogously.
To make it consistent, if $\pi$ is an h-h-path of $k$ links,
we also consider it to be an h-v-path of $k+1$ links (i.e., we
enforce an additional edge of zero length at the end of the path), and
similarly, it is also considered to be an v-h-path of $k+1$ links and
a v-v-path of $k+2$ links. A {\em min-link h-v-path} from $s$ to $t$ is an h-v-path from $s$ to $t$ with
the minimum number of links. The {\em min-link h-h-paths}, {\em min-link v-h-paths}, and {\em min-link v-v-paths} are
defined analogously.
To find a min-link \st\ path, our algorithm will
find the following four \st\ paths: a min-link h-v-path,   a min-link
h-h-path,  a min-link v-v-path, and a min-link v-h-path. Clearly,
among the above four paths, the one with the minimum number of links
is a min-link \st\ path.

To answer the min-link query, our algorithm will compute four link
distance maps of $O(n)$ size each: an {\em  h-h-map}, an {\em h-v-map}, a {\em v-h-map},
and a {\em v-v-map}, defined as follows. The h-h-map is a
decomposition of $\calF$ into regions such that for any region $R$,
the link distances of the min-link h-h-paths from $s$ to all points in
$R$ are the same. The other three maps are defined analogously.
In addition, we build linear-size point location data structures
\cite{ref:EdelsbrunnerOp86,ref:KirkpatrickOp83} on these maps in linear
time.  With the above four maps, for any query point $t$, we determine the region
containing $t$ in each map and the one
with the smallest link distance gives our sought min-link \st\ path
distance.

The {\em vertical visibility decomposition} of $\calF$, denoted by
$\vtd(\calF)$,
is obtained by extending each vertical edge of the obstacles in
$\calP$ until it hits either another obstacle or
the boundary of $\calR$ (e.g., see Fig.~\ref{fig:vtd}).
We call the above edge extensions the {\em diagonals}.
We consider the source $s$ as a special obstacle and extend a vertical diagonal
through $s$.
Since $\calF$ has been triangulated,  $\vtd(\calF)$ can be obtained in
$O(n)$ time \cite{ref:Bar-YehudaTr94,ref:ChazelleTr91,ref:ChazelleTr84,ref:FournierTr84}.
In $\vtd(\calF)$, $\calF$ is decomposed into rectangles,
also called {\em cells}. Due to our general position
assumption, each vertical side of a cell may contain at most two diagonals.
The horizontal visibility decomposition of $\calF$, denoted by $\htd(\calF)$, is
defined similarly by extending the horizontal edges of $\calP$.

Our v-v-map is on $\vtd(\calF)$, i.e., for each cell $C$ of
$\vtd(\calF)$, the
link distances of the min-link v-v-paths from $s$ to all points in $C$
that are not on diagonals
are the same and we denote this distance by $dis_{vv}(C)$, and for
each diagonal $d$ of $\vtd(\calF)$, the
link distances of the min-link v-v-paths from $s$ to all points on $d$
are the same and we denote this distance by $dis_{vv}(d)$. In fact, it holds that $dis_{vv}(d)=\min\{dis_{vv}(C_l), dis_{vv}(C_r)\}$,
where $C_l$ and $C_r$ are the two cells on the left and right of $d$,
respectively.
The goal of our algorithm for computing the v-v-map is to compute
$dis_{vv}(C)$ for each cell $C$ and $dis(d)$ for each
diagonal $d$ of $\vtd(\calF)$.  In
addition, we also need to maintain some path information to retrieve
an actual path for each query.

Similarly, our h-v-map is also on $\vtd(\calF)$, but the h-h-map and the v-h-map
are both on $\htd(\calF)$.

\begin{figure}[t]
\begin{minipage}[t]{\linewidth}
\begin{center}
\includegraphics[totalheight=1.8in]{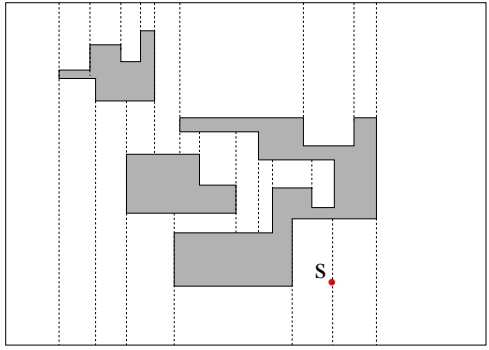}
\caption{\footnotesize Illustrating the vertical visibility
decomposition of $\calF$. }
\label{fig:vtd}
\end{center}
\end{minipage}
\vspace{-0.15in}
\end{figure}


In Section \ref{sec:DN}, we review and discuss the DN algorithm \cite{ref:DasGe91} in a way that will help us to introduce our algorithm later. The DN algorithm only labels the diagonals of $\vtd(\calF)$.
Das and Narasimhan \cite{ref:DasGe91} claimed that queries can be
answered by using only the distance values on the diagonals. However,
this is not clear to us. We, instead, also label the cells
as discussed above. We will show in Section \ref{sec:labelcell} that the DN algorithm can be easily
adapted to computing the link distances for the cells too. We will also discuss in Section \ref{sec:labelcell} how to maintain path information to retrieve an actual path for each query.

\subsection{The DN Algorithm}
\label{sec:DN}

The DN algorithm also computes the four maps discussed above. We consider the
v-v-map first. The algorithm works on the vertical visibility
decomposition $\vtd(\calF)$ and the goal is to compute $dis_{vv}(d)$ for each
diagonal $d$.
To simplify the notation, unless otherwise
stated, we use $dis(\cdot)$ to refer to $dis_{vv}(\cdot)$.
Initially, all diagonals have distance value
$\infty$ except $dis(d_s)=1$, where $d_s$ is the diagonal through $s$.
Note that if a diagonal $d$ is on a side $e$ of a cell, then
whenever $dis(d)$ is updated, $dis(e)$ is automatically set to $dis(d)$.

The DN algorithm has many phases. In the $i$-th phase for $i\geq 0$, the
algorithm determines the set $V_i$ of diagonals $d$ whose distances
$dis(d)$ are equal to $2i+1$, and these diagonals are then ``labeled''
with distance $2i+1$ (e.g., see Fig.~\ref{fig:DN}). Initially, $i=0$, and $V_0$ consists of
the only diagonal through $s$. As discussed in \cite{ref:DasGe91},
if we put light sources on the diagonals in $V_{i-1}$,
then $V_i$ consists of all new diagonals that will get illuminated with light emanating horizontally from the light sources.

Consider a general $i$-th phase for $i\geq 1$.
We assume $V_{i-1}$ has been determined. There are
two procedures: {\em right-sweep} and {\em left-sweep}. In the
right-sweep (resp., left-sweep), we illuminate the diagonals in the direction
from left to right (resp., from right to left).
The right-sweep procedure starts from the
{\em locally-rightmost} diagonals of $V_{i-1}$, defined as follows.
Consider any diagonal $d$ in $V_{i-1}$. Let $C$ be the cell of $\vtd(\calF)$ on the
right of $d$, i.e., $d$ is on the left side of $C$.
Let $e_r$ be the right side of $C$. If $dis(e_r)\neq 2i-1$\footnote{
Alternatively, one may replace $dis(e_r)\neq 2i-1$ by $dis(e_r)>2i-1$
for the definition, but we choose to use $dis(e_r)\neq 2i-1$ only for
making our discussion later easier and this will not affect the
algorithm running time.}, then $d$ is a {\em locally-rightmost} diagonal of $V_{i-1}$.
Similarly, the left-sweep starts from the {\em
locally-leftmost} diagonals of $V_{i-1}$. Both
locally-leftmost and locally-rightmost diagonals are referred to as
{\em locally-outmost} diagonals. Below, we first discuss the
right-sweep.

\begin{figure}[t]
\begin{minipage}[t]{0.60\linewidth}
\begin{center}
\includegraphics[totalheight=2.0in]{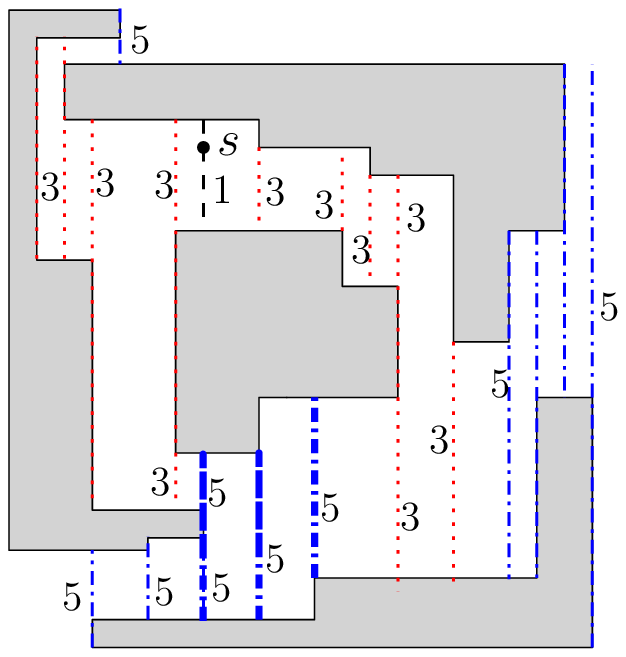}
\caption{\footnotesize Illustrating the DN algorithm: all diagonals
have been labeled. Note that the three thick dash-dotted diagonals
(labeled $5$) are swept twice in the second phase.}
\label{fig:DN}
\end{center}
\end{minipage}
\hspace*{0.05in}
\begin{minipage}[t]{0.38\linewidth}
\begin{center}
\includegraphics[totalheight=1.5in]{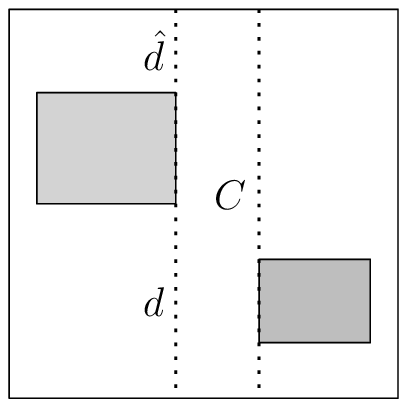}
\caption{\footnotesize Illustrating a cell $C$ where $d$ is on the
left side of $C$.}
\label{fig:cell}
\end{center}
\end{minipage}
\vspace{-0.15in}
\end{figure}

For each
locally-rightmost diagonal $d$, we put a rightward ``light beam'' on
$d$, and let $B(d)$ denote the beam;
we also call $d$ the {\em generator} of the beam.
Formally, we may define the beam as the segment $d$ associated with the rightward
direction. Initially we insert all locally-rightmost
diagonals into a min-heap $H_R$ whose ``keys'' are the $x$-coordinates of the
diagonals (i.e., the leftmost diagonal is at the root).
By using $H_R$, the diagonals involved in the right-sweep will be processed from left to
right. As discussed in \cite{ref:DasGe91}, this can control the
right-sweep in a global manner and thus avoid the same
diagonal being processed many times.
Although initially each diagonal of $H_R$ has only one
beam, later we will insert more diagonals into $H_R$ and a
diagonal $d$ may have more than one beam,
in which case $B(d)$ denotes the set of beams on $d$.

As long as $H_R$ is not empty, we repeatedly do the following.

We obtain the leftmost diagonal $d$ of $H_R$ and remove it from $H_R$.
Let $C$ be the cell on the right of $d$ (e.g., see Fig.~\ref{fig:cell}).
We {\em process} $d$ in the following way. Intuitively we want to
propagate the beams of $B(d)$ to other diagonals in $C$.
Let $e_l$ and $e_r$ denote the left and right sides of $C$, respectively.
Note that $d$ is on $e_l$.  Recall that each cell side has at
most two diagonals.

If $e_l$ has another diagonal $\hatd$ (e.g., see Fig.~\ref{fig:cell}) and $\hatd$ has
not been labeled (i.e., $dis(\hatd)=\infty$), 
then we set $dis(\hatd)=2i+1$. The beam set
$B(\hatd)$ of $\hatd$ is set to $\emptyset$
since there is no beam from $B(d)$ illuminating $\hatd$.
Further, although $B(\hatd)=\emptyset$, we associate the {\em
leftward direction} with it, because
if later the left-sweep procedure does not illuminate $\hatd$, $\hatd$
will be a locally-leftmost diagonal of $V_i$ and
generate a leftward beam in the next phase.
We ``temporarily'' mark $\hatd$ as a locally-leftmost
diagonal.
We say ``temporarily'' because $\hatd$ may not be
locally-leftmost any more after the left-sweep procedure, as discussed later.

If $\hatd$ has been labeled, then $dis(\hatd)$ must be $2i+1$. To see this, because we are currently at the $i$-th phase, any diagonal that has been labeled must have a distance value at most $2i+1$. But if $dis(\hatd)<2i+1$, then $\hatd$ has already been processed, and since $\hatd$ and $d$ are on the same cell side, we must have already obtained $dis(d)=dis(\hatd)<2i+1$, contradicting with $dis(d)=2i+1$. Hence, $dis(\hatd)=2i+1$. In this case, we do nothing on $\hatd$.

Next, we consider the diagonals on
the right side $e_r$ of $C$. Depending on the values
of $dis(e_r)$, there are several cases. Since we are at the $i$-th
phase, either $dis(e_r)=\infty$ or $dis(e_r)\leq 2i+1$.

\begin{enumerate}
\item
If $dis(e_r)<2i+1$, then we do nothing on $e_r$ because all diagonals on $e_r$ have already
been processed.

\item
If $dis(e_r)=\infty$,
then we set $dis(e_r)=2i+1$. If $e_r$ does not have any diagonals,
we are done. Otherwise, for each diagonal
$d'$ on $e_r$, we determine the portions of beams of $B(d)$ that can
illuminate $d'$, which are the rightward projections of $B(d)$ on
$d'$ (beams of $B(d)$ may be ``narrowed'' or ``split'', e.g., see Fig.~\ref{fig:beams}). We use
$B(d)\cap d'$ to denote the above portions of $B(d)$. It is
possible that $B(d)\cap d'$ is empty.

If $B(d)\cap d'=\emptyset$, then we temporarily mark $d'$ as a locally-rightmost diagonal
and set $B(d')=\emptyset$ with the {\em rightward
direction}; otherwise, we set $B(d')=B(d)\cap d'$ and insert $d'$ to
$H_R$ (later we will propagate the beams of $B(d')$ further to the
right).

\begin{figure}[t]
\begin{minipage}[t]{\linewidth}
\begin{center}
\includegraphics[totalheight=1.3in]{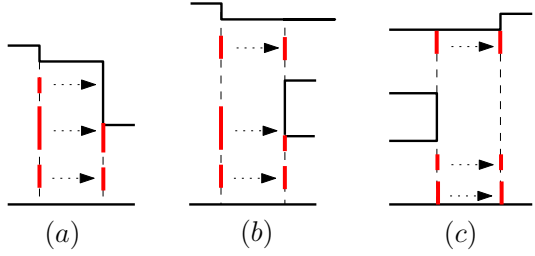}
\caption{\footnotesize Illustrating some beam operations in a right-sweep procedure: In (a) and (b), beams are {\em split}, and some beams are ``narrowed'' and some beams ``terminate'' at obstacle edges; in (c), beams are {\em merged}.}
\label{fig:beams}
\end{center}
\end{minipage}
\vspace{-0.15in}
\end{figure}

\item
If $dis(e_r)=2i+1$, this case happens because $e_r$
was illuminated by beams from another diagonal $\hatd$ on $e_l$.
Hence, each diagonal $d'$ on
$e_r$ may already have a non-empty $B(d')$. But $d'$ may receive
more beams from $B(d)$. We first determine $B(d)\cap d'$ (as defined
above) and then
do a ``merge'' operation by merging $B(d)\cap d'$ with $B(d')$ (e.g., see Fig.~\ref{fig:beams}).
Finally, we set $B(d')$ to the above merged set of beams (with the
rightward direction). Due to the merge operation,
$B(d')$ may contain more than one beam.

If $B(d')$ was empty before the merge and becomes non-empty after the
merge, then we insert $d'$ into $H_R$. If $B(d')$ is still empty after
the merge, then we temporarily mark $d'$ as a locally-rightmost diagonal.
If $B(d')$ was non-empty before the merge, then $d'$ is already in
$H_R$, so we do not need to insert it into $H_R$ again.
\end{enumerate}

The above finishes the processing of $d$.  The right-sweep is done
once $H_R$ becomes empty. For the implementation, we
need to maintain the beams of $B(d)$ on each diagonal $d$. To this end,
as in \cite{ref:DasGe91}, we can use
a balanced binary search tree (e.g., a 2-3-4 tree \cite{ref:CLRS09}) such that we can perform
the ``merge'', ``split'', ``insertion'',
``deletion'', and ``search'' operations each in logarithmic time.

The left-sweep procedure is similar. However, there is one
subtle thing. If the sweep encounters a diagonal $d$ that has been
labeled by the right-sweep, then this is ignored and
we keep sweeping as if $d$ were not labeled.
As discussed in
\cite{ref:DasGe91}, the reason for this is that the left-sweep may reach more
cells than the right-sweep (e.g., see Fig.~\ref{fig:DN}). In this way, each diagonal
can be processed at most twice in a phase. But no diagonal can be
processed in more than one phase.
Also, suppose a diagonal $d$ was temporarily marked as a locally-outmost diagonal during the right-sweep; if $d$ is illuminated again in the left-sweep but $d$ is not marked locally-outmost in the left-sweep, then we clear the previous mark on $d$ (i.e., $d$ is not considered locally-outmost  any more).
After the left-sweep, the
remaining locally-outmost diagonals are exactly those that will be
used in the next phase.

The above describes the $i$-th phase of the algorithm. The algorithm
is done after all diagonals are labeled.
Due to that heaps are used to control the sweep procedures and balanced binary
search tree are used to support beam operations, the total time of the
algorithm is $O(n\log n)$ because each diagonal can be processed at most
twice in the entire algorithm (once by a left-sweep procedure and
once by a right-sweep procedure). Clearly, the space is $O(n)$.

For computing the h-v-map, the algorithm is similar except that $V_0$ now consists of all
diagonals that intersect $d'_s$, where $d'_s$ is the horizontal line segment
extended from $s$ until it hits the obstacles, and in each $i$-th
phase for $i\geq 0$, the value $dis_{hv}(d)$ is set to $2(i+1)$ for
any diagonal $d$ in $V_i$. The algorithms for computing
h-h-map and v-h-map in $\htd(\calF)$ are symmetric.

\subsection{Labeling Cells and Retrieving Paths}
\label{sec:labelcell}

We first show how to slightly modify the DN algorithm to label cells.
We only discuss the v-v-map since the other maps are similar.

Recall that for each cell $C$ of $\vtd(\calF)$, all the points
in $C$ that are not on the diagonals have the same distance values.
Hence, to compute $dis_{vv}(C)$, it is sufficient to know the distance
value for any arbitrary point in $C$ that is not on any diagonal. To this end,
for each cell $C$, we add a
vertical segment in $C$ with its upper endpoint on the upper side of
$C$ and its lower endpoint on the lower side of $C$ such that the
segment is not overlapped with the left side or the right side of $C$;
hence, no point of the segment is on any diagonal, and
we call the segment a ``fake diagonal''. Let $\vtd'(\calF)$ denote
$\vtd(\calF)$ with all fake diagonals.
We run the DN algorithm on $\vtd'(\calF)$ and treat all fake diagonals
as ``ordinary diagonals'' to label all diagonals and fake diagonals of
$\vtd'(\calF)$. Finally, we label each cell of
$\vtd(\calF)$ with the same distance value as its fake diagonal.
Since the total number of fake diagonals are $O(n)$, the running time is still
$O(n\log n)$ and the space is $O(n)$.

To obtain an actual min-link v-v-path from $s$ to $t$, we need to
maintain additional information on our v-v-map. No details
on this are given in \cite{ref:DasGe91}.
We briefly discuss it below for the completeness
of this paper. Essentially, when we
label the cell sides (and the fake diagonals), we need to record how
we reach there. Specifically, suppose we label a diagonal $d$ on the
right side of a cell in a right-sweep procedure due
to a beam from a diagonal on the left side of the cell; then we record any such
beam at $d$ (it is sufficient to record any
point on $d$ in the beam) along with its
generator, and in the case where $B(d)$ is empty,
$d$ is a locally-outmost diagonal and the path should make a turn
there. Further, for each locally-outmost diagonal $d$, it is reached by a beam that illuminates the
cell side containing $d$ and we record that beam for $d$ so that we know the turn on $d$ is for that beam.
With this path information, for any query point $t$,
we can easily retrieve an actual
min-link v-v-path from $t$ back to $s$ in time we claimed before. We omit the
details.

\section{The Main Scheme of Our Algorithm}
\label{sec:mainscheme}


In this section, we focus on the main scheme of our algorithm while leaving
the algorithms for processing corridors in
Section \ref{sec:algocorridor}. We first introduce the corridor
structure.

\subsection{The Corridor Structure}
\label{sec:corridor}

The corridor structure in rectilinear domains is similar to that
in general polygonal domains \cite{ref:KapoorAn97}.
Let $G_{vtd}$ be the dual graph of the vertical visibility
decomposition $\vtd(\calF)$ (see
Fig.~\ref{fig:graphG}), i.e., each node of
$G_{vtd}$ corresponds to a cell of  $\vtd(\calF)$,
and each edge of $G_{vtd}$ connects two
nodes corresponding to two cells sharing the same diagonal.
Based on $G_{vtd}$, we obtain a {\em corridor graph} $G$ as follows (see
Fig.~\ref{fig:reduceG}). First, we
remove every degree-one node from $G_{vtd}$ along with its incident edge;
repeat this process until no degree-one node exists. Second,
remove every degree-two node from $G_{vtd}$
and replace its two incident edges by a
single edge; repeat this process until no degree-two node exists. The
remaining graph is $G$. The cells in $\vtd(\calF)$ corresponding to the nodes
in $G$ are called {\em junction cells} (see Fig.~\ref{fig:reduceG}).
We consider the diagonal through $s$ as a degenerate junction cell.
Similar to the corridor structure in the
general polygonal domains \cite{ref:KapoorAn97},
the graph $G$ has $O(h)$ nodes and $O(h)$ edges.
The removal of all junction cells from $\vtd(\calF)$ results in $O(h)$
{\em corridors}, each of which corresponds to an edge of $G$.

\begin{figure}[t]
\begin{minipage}[t]{0.49\linewidth}
\begin{center}
\includegraphics[totalheight=1.6in]{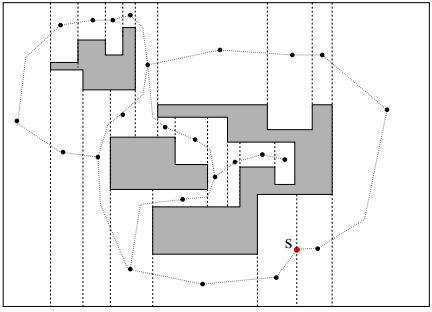}
\caption{\footnotesize Illustrating the vertical visibility
decomposition (the
dashed segments are diagonals) and its dual graph $G_{vtd}$. }
\label{fig:graphG}
\end{center}
\end{minipage}
\hspace{0.02in}
\begin{minipage}[t]{0.49\linewidth}
\begin{center}
\includegraphics[totalheight=1.6in]{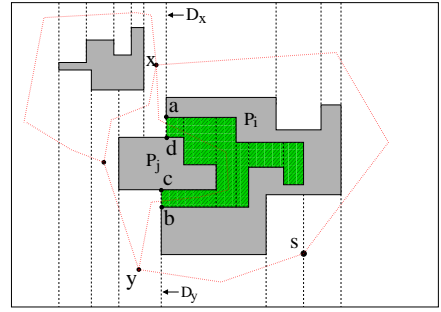}
\caption{\footnotesize Illustrating the graph $G$,
and the corridor (shaded by slashes)
bounded by $P_i$ and $P_j$.
 }
\label{fig:reduceG}
\end{center}
\end{minipage}
\end{figure}

The boundary of any corridor $\calC$ consists of four parts (see
Fig.~\ref{fig:reduceG}): (1) The boundary portion of an obstacle $P_i$, from
a point $a$ to a point $b$; (2) a diagonal $\overline{bc}$; (3) the
boundary portion of an obstacle $P_j$ from $c$ to a point $d$; (4) a
diagonal $\overline{da}$.  The two diagonals $\overline{bc}$
and $\overline{ad}$ are called the {\em doors} of the corridor $\calC$.
The corridor $\calC$ is a simple rectilinear polygon.

\subsection{The Main Idea}
\label{sec:main}

We focus on computing the v-v-map on $\vtd(\calF)$, and
the other three maps can be computed similarly.
Our goal is to label $d$ for each diagonal $d$ of $\vtd(\calF)$, i.e., compute the
distance value $dis(d)=dis_{vv}(d)$.
As discussed in Section \ref{sec:labelcell}, the same algorithm
can be used to label cells as well. As in the DN algorithm
each diagonal $d$ will maintain a beam set $B(d)$.


Here is some intuition that motivates us to improve the DN algorithm.
When we are running the DN algorithm, a sweep
procedure will enter each corridor though one of its two doors, and the
procedure will either sweep the entire corridor and leave the corridor
through the other door, or terminate inside the corridor (in which case
the sweep ``hits'' another sweep that entered the corridor through the
other door and both sweeps terminate after ``collision'').  This means that if we can
determine the beams and the distance values at the doors of a corridor,
then we can process the corridor independently in a more efficient way since
the corridor is a simple rectilinear polygon.

Our algorithm follows the similar scheme as the DN algorithm. The
sweep in the junction cells is still processed and
controlled in a global manner in each phase.
However, whenever the sweep enters a corridor through one of its doors, the corridor will
be processed independently by using our more efficient {\em
corridor-processing} algorithm given in Section \ref{sec:algocorridor}
(one may view that the sweep procedure jumps from one junction cell to
another through the corridor connecting them).
We should point out that since
in the DN algorithm a diagonal may be processed twice in the two
sweep procedures in the same phase, an entire corridor may be processed twice in
the same phase (that case happens {\em only if} the beams on a door
can illuminate the other door directly, and vice versa).

The running time of our algorithm is $O(n+h\log h)$. More specifically, since
there are $O(h)$ junction cells, the time spent on processing the
diagonals in all
junction cells is $O(h\log h)$, and the processing on all corridors
takes $O(n)$ time because the number of vertices of all
corridors is $O(n)$, in addition to another $O(h\log h)$ time spent on
maintaining the beams on all diagonals.

\subsection{The Algorithm}

To help the reader understand the algorithm, we first describe the first few
steps of the algorithm and then delve into the details of the general steps.

Initially, we set $dis(d)$ to $\infty$ and $B(d)=\emptyset$ for each diagonal $d$ except
that $dis(d_s)=1$ and $B(d_s)=\{d_s\}$, where $d_s$ is the diagonal through $s$.
In our discussion of the corridor structure, $d_s$ is
considered as a degenerate junction cell. Here, for ease of
discussion, we assume $d_s$ is in the interior of a corridor $\calC(s)$ as shown
in Fig.~\ref{fig:reduceG}. We first process $\calC(s)$, i.e., label
all diagonals in $\calC(s)$.  For the purpose of describing our algorithm,
one may assume we still use the DN algorithm to process $\calC(s)$,
and later we will replace the DN algorithm by our
new and more efficient corridor-processing algorithm in Section \ref{sec:algocorridor}.

Denote by $D_x$ and $D_y$ the two doors of $\calC(s)$ (e.g., see
Fig.~\ref{fig:reduceG}). Suppose after the above processing of
$\calC(s)$, $dis(D_x)=2i+1$ and $dis(D_y)=2j+1$ for some integers $i$
and $j$.  Since $dis(D_x)$ is obtained ``locally'' in $\calC(x)$,
i.e., $dis(D_x)$ is the link distance of the {\em local} min-link v-v-path from
$s$ to $D_x$ in $\calC(s)$, it may not be ``set correctly'', i.e., it may not be
the link distance of the {\em global} min-link v-v-path from  $s$ to $D_x$ in
$\calF$. The value $dis(D_y)$ has the same issue. However, we show
below the at least one of $dis(D_x)$ and $dis(D_y)$ must have been set
correctly.

Without loss of generality, assume $j\leq i$.

\begin{observation}\label{obser:10}
If $j\leq i$, then the value $dis(D_y)$ has been set correctly.
\end{observation}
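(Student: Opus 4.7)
My plan is to argue by contradiction, comparing the v-v link distance computed locally inside $\calC(s)$ with the true global v-v link distance in $\calF$. Since any v-v-path that lies entirely inside $\calC(s)$ is also a valid v-v-path in $\calF$, the global min-link v-v distance from $s$ to any point of $D_y$ is at most the locally computed value $2j+1$; it remains to show that it is also at least $2j+1$. Assuming the contrary, and using that v-v-path lengths are odd, there is a global v-v-path $\pi$ from $s$ to some point of $D_y$ with at most $2j-1$ links. Such a $\pi$ cannot remain inside $\calC(s)$, for that would already refute the local value $dis(D_y)=2j+1$.

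The geometric heart of the argument is that, to leave $\calC(s)$, a rectilinear path in $\calF$ must cross one of the two doors $D_x,D_y$, because the rest of the corridor's boundary consists of obstacle walls; and since both doors are vertical diagonals of $\vtd(\calF)$, any such crossing can only be effected by a \emph{horizontal} link of $\pi$ (a vertical link cannot transversally cross a vertical segment). Accordingly, I single out the first horizontal link $h$ of $\pi$ whose interior meets $D_x\cup D_y$; let $p$ be that first intersection and let $D\in\{D_x,D_y\}$ be the door through $p$. The prefix $\pi'$ of $\pi$ ending at $p$ then lies wholly in $\calC(s)$ and is a v-h-path with some even number $2r$ of links, where $r\ge 1$ because crossing a door already requires at least one horizontal link. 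Applying the paper's convention that a v-h-path of $2r$ links is also a v-v-path of $2r+1$ links (obtained by appending a zero-length final vertical), $\pi'$ yields a local v-v-path from $s$ to the point $p\in D$ of length $2r+1$, so the locally computed distance at $D$ satisfies $dis(D)\le 2r+1$.

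The endgame is a direct link count. Since $\pi'$ is a prefix of $\pi$ and $\pi$ has at most $2j-1$ links, $2r\le 2j-1$, and hence $dis(D)\le 2j-1$. If $D=D_y$, this immediately contradicts $dis(D_y)=2j+1$. If $D=D_x$, then the hypothesis $j\le i$ gives $dis(D_x)\le 2j-1<2j+1\le 2i+1$, contradicting $dis(D_x)=2i+1$. Either case produces a contradiction, so the global v-v distance to $D_y$ is at least $2j+1$; combined with the upper bound, it equals $2j+1$, and $dis(D_y)$ has indeed been set correctly. I expect the main obstacle to be making the step ``exit from the corridor is a horizontal crossing of a door'' fully rigorous---that is where the simple-polygon structure of the corridor and the fact that doors are vertical diagonals are actually used---while the rest is a straightforward link-counting argument.
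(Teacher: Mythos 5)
Your proof is correct and follows essentially the same route as the paper's: take a global min-link v-v-path to $D_y$, locate the first point $p$ where it meets a door of $\calC(s)$, note the prefix up to $p$ lies in $\calC(s)$ and is therefore lower-bounded by the locally computed door distances, and conclude using $j\le i$. You merely phrase it as a contradiction and add the (correct) parity/horizontal-crossing bookkeeping that the paper leaves implicit.
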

\begin{proof}
Let $\pi$ be any global min-link v-v-path from $s$ to $D_y$ in $\calF$. For any subpath $\pi'$ of $\pi$,
we use $dis(\pi')$ to denote the link distance of $\pi'$.
If $\pi$ is in
$\calC(s)$, then our observation follows. Otherwise, $\pi$ must cross one of
the doors of $\calC(s)$. Let $p$ be the first
point on a door of the corridor if we go from $s$ to $D_y$ along
$\pi$, and let $\pi(s,p)$ denote the sub-path of $\pi$ from $s$ to $p$. If $p$ is on $D_y$, then $\pi(s,p)$ is a min-link v-v-path from $s$
to $D_y$. Since $\pi(s,p)\in \calC(s)$ and $dis(D_d)$ is the link distance of the local
min-link v-v-path from $s$ to $D_y$ in $\calC(s)$, we have $dis(\pi)\geq dis(\pi(s,p))\geq dis(D_y)$, and thus our observation follows. If $p$
is on $D_x$, then $dis(\pi(s,p))=2i+1\geq 2j+1$. Therefore, $dis(\pi)\geq
dis(\pi(s,p))\geq 2j+1=dis(D_y)$ and our observation also follows.\qed
\end{proof}

Another way to see the observation is that if we were running the DN algorithm,
after the $(j-1)$-th phase, neither $D_x$ nor $D_y$ is labeled, and
no diagonal outside the corridor is labeled either. In the $j$-th
phase, $D_y$ will be labeled and thus $dis(D_y)=2j+1$ is set
correctly.

After the processing of $\calC(s)$, the beams on $D_y$, i.e.,
$B(D_y)$, have also been obtained.
The next step is to process the diagonal $D_y$ by propagating the
beams of $D_y$ outside the corridor.


Next, we describe the details of the general steps of our algorithm.

We use a min-heap $H$ to maintain the diagonals in all junction cells where the ``keys'' are the distance values the diagonals currently have (and these values may not be set correctly), with the {\em smallest} key at the root of $H$.  Since there are $O(h)$ junction cells, the size of $H$ is always $O(h)$.
Each diagonal $d$ in $H$ is also associated with its beam set
$B(d)$ (along with its direction).
As in the DN algorithm, it is possible that $B(d)$ is empty,
in which case $d$ might be a locally-outmost
diagonal, but it is also associated with a
direction for generating a beam towards that direction  in the next phase.

If some diagonals of $H$ have the same
keys, we break the ties by applying the following rules. Consider
two diagonals $d_1$ and $d_2$ in $H$ with $dis(d_1)=dis(d_2)$. If
$B(d_1)$ and $B(d_2)$ are both empty or both non-empty, then we break ties
arbitrarily. Otherwise, without loss of generality, assume $B(d_1)$ is
not empty but $B(d_2)$ is empty. Then, we consider the key of $d_1$ to be {\em smaller}
than that of $d_2$ in $H$.
The reason is as follows. Since $B(d_1)\neq \emptyset$ and $B(d_2)=\emptyset$,
the current sweep procedure should be over {\em before}
processing $d_2$ while the sweep
should continue {\em after} processing $d_1$, and thus, we
should process $d_1$ before $d_2$.  Therefore,
our way of resolving ties in $H$ is crucial and consistent with the DN algorithm.
In the following, to differentiate from keys of other heaps, for any diagonal $d$, even if $d$ is not in $H$, we consider $dis(d)$ along with $B(d)$ as the {\em global-key} of $d$, and whenever we compare the global-keys of diagonals, we always follow the above rules.

Consider any corridor $\calC$ with two doors $d$ and $d'$. Suppose
the beams of $B(d)$ are going inside $\calC$, and we want to
{\em process} $\calC$ (i.e., compute the v-v-map in $\calC$) using the beams of $B(d)$.
We say the above way of processing $\calC$ is in the {\em direction} from
$d$ to $d'$. As will be seen later,
a corridor may be processed twice: once
from $d$ to $d'$ and the other from $d'$ to $d$. Due
to the special geometric structure of the corridor, we have the
following observation that will be useful later.

\begin{observation}\label{obser:20}
Suppose $d$ and $d'$ are the two doors of a corridor $\calC$, and the
direction of the beams of $B(d)$ is towards the inside of $\calC$. Then after $\calC$ is
processed by using $B(d)$, the beam set of $d'$ is not empty.
\end{observation}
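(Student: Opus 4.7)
My plan is to trace beam propagation through the chain of cells that form the corridor. Let $C_1, C_2, \ldots, C_k$ be the cells of $\calC$ listed in the order of the unique path in $G_{vtd}$ from the junction cell containing $d$ to the junction cell containing $d'$; consecutive cells share a diagonal, $d$ lies on a side of $C_1$, and $d'$ lies on a side of $C_k$. With this chain in hand, I would establish by induction on $i$ that after finitely many phases of processing $\calC$, the diagonal shared between $C_i$ and $C_{i+1}$ carries a non-empty beam (possibly only after one further phase in which a locally-outmost diagonal emits a refreshed full-width beam). The base case is the hypothesis that $B(d)$ is non-empty and directed into $\calC$. For the inductive step in each cell $C_i$, I would distinguish two sub-cases: either the incoming beam traverses $C_i$ horizontally and its projection onto the diagonal shared with $C_{i+1}$ is non-empty, passing the invariant directly; or the beam is fully blocked by an obstacle edge or meets the sibling rule, in which case the locally-outmost/sibling mechanism inherited from the DN algorithm (Section~\ref{sec:DN}) produces a fresh full-width beam in the next phase, and because $\calC$ is enclosed between exactly two obstacle chains without further branching, this refresh propagates to the next diagonal along the chain.

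Applying the invariant with $i = k$ places a non-empty beam inside $C_k$ on the diagonal shared with $C_{k-1}$; a final case analysis then yields $B(d') \neq \emptyset$. If $d'$ lies on the side of $C_k$ opposite the entering diagonal, its non-empty horizontal projection defines $B(d')$ directly. If instead $d'$ sits on the same side of $C_k$ as the entering diagonal (the only other geometric configuration consistent with $C_k$ being a degree-two cell), then $d'$ is first labeled with $B(d') = \emptyset$ via the sibling rule but simultaneously marked locally-outmost in the direction pointing out of $\calC$; the locally-outmost refresh in the subsequent phase then replaces $B(d')$ by a full-width beam oriented out of $\calC$, which is exactly what the observation requires.

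The main obstacle is this second sub-case of the final step: we must check that the corridor-processing algorithm of Section~\ref{sec:algocorridor} is run for enough phases on $\calC$ to carry out the locally-outmost refresh, rather than terminating once all interior distance labels are set. That the refresh is actually performed must be built into the corridor-processing algorithm as a deliberate feature; once this is granted, Observation~\ref{obser:20} follows.
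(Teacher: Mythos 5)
Your chain-of-cells induction essentially reproduces what the corridor-processing algorithm already guarantees (an entrance diagonal with an empty beam set immediately regenerates a single full-width beam before propagating further), so the corridor is indeed traversed with a non-empty, inward-directed beam set all the way to the cell $C$ of $\vtd(\calC)$ that contains $d'$. The gap is in your final step. In your first sub-case you simply assert that the horizontal projection of the surviving beams onto $d'$ is non-empty, but for a generic diagonal this can fail: the side of a cell containing a diagonal may also contain an obstacle edge, and every incoming beam could terminate on that edge, leaving the diagonal with an empty beam set (this is exactly how new locally-outmost diagonals arise in the interior of the corridor). The missing ingredient is a geometric fact special to doors, and it is the heart of the paper's proof: $d'$ is the extension of a vertical obstacle edge whose other endpoint lies \emph{outside} $\calC$, so by the general-position assumption $d'$ coincides with the \emph{entire} side of $C$ --- no obstacle edge and no second diagonal share that side. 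Once this is established, every beam reaching $C$ projects fully onto $d'$, and the observation follows by a purely local argument with no induction along the chain at all.

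Your second sub-case is also unsound as written. The observation asserts $B(d')\neq\emptyset$ \emph{immediately after} $\calC$ is processed; the main algorithm relies on it at exactly that moment (e.g., to set $B_r(d')=B'\neq\emptyset$ and decide whether $d'$ enters $H_R$ or $H_L$ rather than $H$). Deferring to a ``locally-outmost refresh in the subsequent phase'' therefore does not prove the statement, as your own closing paragraph concedes. In fact that sub-case is vacuous: since $d'$ occupies the whole side of its cell, it cannot be the sibling diagonal sharing a side with the entrance diagonal, so the sibling rule never applies to a door.
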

\begin{proof}
Let $\vtd(\calC)$ denote the vertical visibility decomposition of the
corridor $\calC$.
Consider the cell $C$ of $\vtd(\calC)$ that contains $d'$.
Without loss of generality, assume $d'$ is on the right side of
$C$. Denote by $e_r$ the right side of $C$.

We claim that $d'$ is the entire right side of $C$, i.e., $d'$ is $e_r$.
Indeed, according to our corridor structure, $d'$ is an extension of a
vertical obstacle edge $e$ and one endpoint $p$ of $e$ is also an endpoint of
$d'$ and the other endpoint of $e$ is outside the corridor $\calC$. This
means $p$ is also an endpoint of $e_r$ and
the other endpoint $q$ of $d'$ than $p$ is not on $e$ but on
another obstacle edge $e'$. Due to our general position assumption
that no two vertical edges are collinear, $q$ must be in the interior of
$e'$, implying that $q$ is also an endpoint of $e_r$. Hence, we obtain $d'=e_r=\overline{pq}$.

Note that we obtain the beam set of $d'$ from the rightward beams of
the diagonals on the left side of $C$. Now that $d'$ is
the entire right side of $C$, $d'$ will receive all beams of any
diagonal on the left side of $C$. Hence, the beam set of $d'$ cannot be empty.\qed
\end{proof}

To show the correctness of our algorithm, we will argue that
our algorithm is consistent with the DN algorithm.
We will show that after the algorithm finishes, each diagonal
in any junction cell is {\em correctly labeled}, i.e.,
both its distance value and its beam set are the same as those in the
DN algorithm.  Let $d^*$ be the diagonal in the root of $H$.
Our algorithm will maintain the following three invariants.

\begin{enumerate}
\item
The diagonal $d^*$ is correctly labeled. Further,
for any other diagonal $d$ in a junction cell, if the global-key of $d$ is no larger than that of $d^*$, then $d$ has been  correctly labeled.

%

\item
For any diagonal $d$ in a junction cell, if $dis(d)\neq \infty$ and
the global-key of $d$ is larger than that of $d^*$, then $d$ is in $H$.

\item
For any corridor $\calC$ with two doors $d$ and $d'$, if $\calC$ is
processed in the direction from $d$ to $d'$,
then $\calC$ will never be processed from $d$ to $d'$ again in the entire
algorithm (although $\calC$ may be processed later in the other direction from $d'$ to $d$).
\end{enumerate}

Initially $H=\emptyset$. Recall that $dis(d_s)=1$ and $B(d)=\{d_s\}$.
We consider the diagonal $d_s$ through $s$ as a degenerate junction cell. Specifically, we consider $d_s$ as two duplicate diagonals with one generating a rightward beam and the other generating a leftward beam from the entire $d_s$. We insert these two diagonals into $H$.
Clearly all algorithm invariants hold.
In the following we will describe the details of our algorithm. To
avoid the tedious discussion, we
will not explicitly explain that the algorithm invariants are
maintained after each step, but rather discuss it briefly later after the algorithm
description.

As long as $H$ is not empty, we repeatedly do the following.

By an extract-min operation on $H$, we obtain the diagonal $d^*$ of $H$ with
the smallest global-key and remove it from $H$. We assume $dis(d^*)=2i+1$ for some integer $i$.
If we were running the DN algorithm, we are currently working on the
$i$-th phase. There are two cases depending on
whether $B(d^*)=\emptyset$.

\subsubsection{$B(d^*)\neq \emptyset$}


We first discuss our algorithm for the case where $B(d^*)\neq
\emptyset$. If we were running the DN algorithm, we would be in the ``middle''
of the $i$-th phase (because $B(d^*)\neq \emptyset$ implies that all locally-outmost diagonals
have already been processed). So we
should continue the left-sweep and right-sweep of the $i$-th phase.
The first question is where we should start the sweep.
Let $S$ be the set of all diagonals in $H$ that have the same global-keys as $d^*$ (according to our way of comparing global-keys, for each diagonal $d\in S$, it holds that
$dis(d)=dis(d^*)$ and $B(d)\neq \emptyset$).
By the first algorithm invariant, all diagonals in $S$ have been correctly
labeled. Due to our corridor structure, the sweeps of the $i$-th phase
``paused'' at the diagonals in $S$. To continue the $i$-th phase, we
``resume'' the sweeps from these diagonals. Unlike the DN algorithm
where we complete the right-sweep before we start the left-sweep,
here, before the pause, we may have already done some left-sweep and
right-sweep. Hence, the two sweeps may be somehow ``interleaved'' and
our algorithm will need to take care of this situation.

Due to our way of breaking ties in $H$, the diagonals of $S$ can be found by
continuing the extract-min operations on $H$ in $O(|S|\log|H|)$ time (i.e., all diagonals of $S$ are removed from $H$). We also let $S$ contain $d^*$.
Let $S_R$ (resp., $S_L$) be the subset of the diagonals of $S$ whose beams are rightward (resp., leftward).
Intuitively, the right-sweep (resp., left-sweep) paused at the diagonals in $S_R$ (resp., $S_L$), and thus,
we resume it from the diagonals in $S_R$ (resp.,
$S_L$). Below we focus on the right-sweep, and the left-sweep is very similar.

Recall that in the right-sweep of the DN algorithm we use a heap
$H_R$ to guide the procedure. Here we do the same thing and
process the diagonals of $H_R$ from left to right.
We build a heap $H_R$ by inserting the diagonals of $S_R$, and the
``keys'' of diagonals in $H_R$ are
their $x$-coordinates such that the leftmost diagonal is at the root.
(Similarly, we build a heap $H_L$ on $S_L$ for the left-sweep.)

The algorithm essentially performs the $i$-th phase as the DN
algorithm. But since here the right-sweep and left-sweep may be
interleaved, in the following discussion, some diagonals may have two
sets of beams with opposite directions. However, our algorithm makes
sure that if a diagonal $d$ has
two sets of beams with opposite directions, it will not be in
$H$ (i.e., it has been removed from $H$), but in both $H_R$ and $H_L$
if $d$ has not been processed yet.
To differentiate the two sets of beams, we use $B_r(d)$ (resp., $B_l(d)$) to denote the beam set of any diagonal $d$ in $H_R$ (resp., $H_L$), meaning that the direction of
the beams is rightward (resp., leftward).

During the right-sweep, if we find a new diagonal $d$ that has the same global-key as $d^*$, then $d$ will be inserted to $H_R$ and $d$ will be removed from $H$ if it is already in $H$. Hence, all diagonals of $H_R$ have the same global-key as $d^*$. In fact, $H_R$ maintains all diagonals in junction cells that will be processed in the subsequent right-sweep of the $i$-th phase.

As long as $H_R$ is not empty, we repeatedly do the following.

We obtain the leftmost diagonal $d$ of $H_R$ (which is at the root of $H_R$)
and remove it from $H_R$. The beams of $B_r(d)$ may
enter either a junction cell or a corridor. In the sequel,
we discuss how to process $d$ in these two cases.

\paragraph{The beams of $B_r(d)$ entering a junction cell.}
Let $C$ denote the junction cell that the beams of $B_r(d)$ enter.
In this case, our way of processing $d$ is similar to
the DN algorithm. Let $e_l$ and $e_r$ denote the left and
right sides of $C$, respectively. Note that $d$ is on the left side
$e_l$. Recall that each cell side may have two diagonals. To process
$d$, we update the labels of all other diagonals of $C$, as
follows.

Suppose there is another diagonal $\hatd$ on $e_l$ (e.g., see Fig.~\ref{fig:cell}).

If $dis(\hatd)=\infty$, we set $dis(\hatd)=dis(d)$ and
$B(\hatd)=\emptyset$ with the leftward direction. Then, we insert $\hatd$ into
$H$.

If $dis(\hatd)\neq \infty$ but $dis(\hatd)>dis(d)$, then we set
$dis(\hatd)=dis(d)$ and $B(\hatd)=\emptyset$ (with the leftward direction).
Since $dis(\hatd)$ was greater than $dis(d)$ but not $\infty$, by the second algorithm invariant,
$\hatd$ is already in the heap $H$.
Hence, after $dis(\hatd)$ and $B(\hatd)$  are reset as above, we do
a ``decrease-key'' operation on $\hatd$ in $H$ since the global-key of $\hatd$ has been decreased.

If $dis(\hatd)\leq dis(d)$, we do nothing.

Next, we consider the diagonals on the right side $e_r$ of $C$. Depending on the
distance value $dis(e_r)$, there are several cases. Note that if
$dis(e_r)\neq \infty$, then
$e_r$ automatically got the value $dis(e_r)$ because a diagonal $d'$ on
$e_r$ was labeled with the same distance value.

\begin{enumerate}
\item
If $dis(e_r)=\infty$, then we set $dis(e_r)=2i+1$. If $e_r$ does not
have any diagonals, then we are done with processing $d$. Otherwise,
for each diagonal $d'$ on $e_r$, we set $dis(d')=2i+1$ and determine
$B_r(d)\cap d'$, i.e., the portion of $B_r(d)$ that can illuminate $d'$.

If $B_r(d)\cap d'\neq \emptyset$, then we set $B_r(d')=B_r(d)\cap
d'$ and insert $d'$ to $H_R$ (note that $d'$ has the same global-key as $d^*$).
Otherwise, we set $B(d')=\emptyset$ with the rightward direction and insert $d'$ into $H$, because the global-key of $d'$ is larger than that of $d^*$ and the right-sweep should stop at $d'$. Note that before the insertion, $d'$ was not in $H$ as $dis(d')$ was $\infty$.

\item
If $dis(e_r)\neq \infty$ but $dis(e_r)>2i+1$, then the algorithm is
similar as above. For each diagonal $d'$ on $e_r$, $dis(d')$ was equal to $dis(e_r)$, and now we set $dis(d')=2i+1$ and determine $B_r(d)\cap d'$.

If $B_r(d)\cap d'\neq \emptyset$,
we set $B_r(d')=B_r(d)\cap d'$ and insert $d'$ into $H_R$.
Since $dis(d')$ was equal to $dis(e_r)$, the global-key of $d'$ was larger than that of $d^*$ and $dis(d')$ was not $\infty$. By the second algorithm invariant, $d'$ was already in $H$.
Hence, we remove $d'$ from $H$. Note that we can perform a remove operation on $d'$ in
$H$ by doing a decrease-key operation followed by an extract-min
operation \cite{ref:CLRS09}.

If $B_r(d)\cap d'= \emptyset$, we set $B(d')=\emptyset$.
Since  $d'$ was already in $H$,
after setting $dis(d')=2i+1$ and  $B(d')=\emptyset$,  we need to do a
decrease-key operation on $d'$ in $H$.

\item
If $dis(e_r)<2i+1$, we do nothing.

\item
If $dis(e_r)=2i+1$, due to that the left and right sweeps are
interleaved, $e_r$ may have got labeled from the right-sweep, the
left-sweep, or both.
We discuss the three cases below. The algorithm for this case is also simple, but we need a few more words to explain why the algorithm works in that way.

\begin{enumerate}
\item
If $e_r$ got the value $dis(e_r)$ by the right-sweep only, as in
our discussions in the right-sweep of the DN algorithm,
this case happens because $e_r$ was illuminated by beams from
another diagonal $\hatd$ on the left side $e_l$. Hence, for each
diagonal $d'$ on $e_r$, $dis(d')$ and $B_r(d')$ have been set.
Further, if $B_r(d')=\emptyset$, then $d'$ is in $H$; otherwise $d'$
is in $H_R$.

For each diagonal $d'$ on $e_r$,
we first determine $B_r(d)\cap d'$, and then set $B_r(d')$ by merging the
original $B_r(d')$ with $B_r(d)\cap d'$. If $B_r(d')$ was non-empty
before the merge, we do nothing since it is already in $H_R$.
If $B_r(d')$ was empty before the
merge and becomes non-empty after the merge, then we insert $d'$ into
$H_R$ and remove it from $H$.
If $B_r(d')$ is still empty after the merge, then we do nothing
since it is already in $H$.

\item
If $e_r$ got the value $dis(e_r)$ by the left-sweep only, then $e_r$
was illuminated by beams from its right. As discussed in the DN
algorithm, if the right-sweep procedure sweeps a diagonal $d'$ on $e_r$,
the fact that $d'$ was swept already by the left-sweep procedure
should be ignored in the sense that we should keep propagating the
beams of $B_r(d)$ to the right of $d'$. The details are given below.

For each diagonal $d'$ on $e_r$, we first determine $B_r(d)\cap d'$.

\begin{itemize}
\item
If $B_r(d)\cap d'\neq\emptyset$, then we set $B_r(d')=B_r(d)\cap d'$
and insert $d'$ into $H_R$. If $d'$ was already in $H$, then we remove
it from $H$.

\item
If $B_r(d)\cap d'=\emptyset$, no beam from $B_r(d)$ illuminates $d'$.
At first sight, it seems that we should insert $d'$ into $H$
with an empty beam set and the rightward direction.
Below we elaborate on whether we should do so.

Note that $d'$ is a door of a corridor $\calC'$ that is locally on the
right of $d'$.  It is possible that $e_r$ got labeled
because the left-sweep was from $\calC'$, in which case
$B_l(d')$ is not empty by Observation \ref{obser:20},
and thus, $d'$ must be already in $H_L$ (since $dis(d')=2i+1$)
and we do not need to insert $d'$ to $H$ because $d'$ will be processed in the left-sweep of the current phase.

On the other hand, if $e_r$ has another diagonal $d''$, then it is
possible that $e_r$ got labeled because of the processing of the
corridor on the right of $d''$ during the left-sweep. In this case, $d'$ may have got
labeled because of the processing of $d''$, in which case as in
the DN algorithm the beam set of $d'$ must be empty and rightward, and thus $d'$
is already in $H$ with $B(d')=\emptyset$ and the rightward direction and we do
not insert $d'$ into $H$ again. But if $d'$ has not been labeled yet,
then we insert $d'$ into $H$ with $B(d')=\emptyset$ and the rightward
direction. Therefore, for the case where $e_r$ has another diagonal
$d''$, if $d'$ is already in $H$, we do
nothing; otherwise we insert $d'$ to $H$.

It is also possible that $e_r$ got labeled ``simultaneously'' because
of the processing of the two corridors on the right of $d'$ and $d''$,
in which case by Observation \ref{obser:20} we again have $B_l(d')\neq
\emptyset$, and thus $d'$ is already in $H_L$. Hence, we do not need to insert $d'$ to $H$.

In summary, for the case $B_r(d)\cap d'=\emptyset$,  if $d'$ is in
neither $H_L$ nor $H$, then we insert $d'$ into $H$ with
$B(d')=\emptyset$ and the rightward direction; otherwise we do
nothing.

\end{itemize}

\item
If $e_r$ got the value $dis(e_r)$ by both the right-sweep and the
left-sweep, this is a combination case of the above two cases.

For each diagonal $d'$ on $e_r$, we determine $B_r(d)\cap d'$, and
then set $B_r(d')$ by merging the original $B_r(d')$ with $B_r(d)\cap
d'$.

If $B_r(d')$ was non-empty before the merge, then we do nothing since
$d'$ is already in $H_R$.

If $B_r(d')$ was empty before the merge and becomes non-empty
after the merge, then we insert $d'$ into $H_R$. Further, if $d'$ is
in $H$, then we remove it from $H$.

If $B_r(d')$ is still empty after the merge, as in the above second
case, we do the following. If $d'$ is in neither
$H_L$ nor $H$, then we insert $d'$ into $H$ with
$B(d')=\emptyset$ and the rightward direction; otherwise we do
nothing.
\end{enumerate}
\end{enumerate}

We are done with processing $d$ when the beams $B_r(d)$ of $d$ enter
a junction cell.

\paragraph{The beams of $B_r(d)$ entering a corridor.}
Let $\calC$ denote the corridor that the beams enter.
We process $\calC$ using the beams of $B_r(d)$.
Again, one may assume we still use the DN algorithm to process $\calC$,
and later we will replace it by our corridor-processing
algorithm in Section \ref{sec:algocorridor}.

Let $\delta$ be the distance value labeled on the other door $d'$ of $\calC$
by the above processing and let $B'$ denote the corresponding beam set
on $d'$. Let $dis(d')$ and $B(d')$ be the original distance
value and beam set at $d'$ before the processing of $\calC$.
By the third algorithm invariant, this is the first time $\calC$ is processed in the direction from $d$ to $d'$. Hence, if $dis(d')\neq\infty$, then the value $dis(d')$ must be obtained by the sweep
from outside $\calC$, i.e., beams in $B(d')$ are towards the inside of $\calC$.

Due to the above processing of $\calC$, we have obtained another distance value $\delta$ and beam set $B'$ for $d'$. Hence, we need to update the label of $d'$ and possibly insert $d'$ to some heap.
Depending on the value of $dis(d')$, there are several cases.

\begin{enumerate}
\item
If $dis(d')$ is $\infty$, then we set $dis(d')=\delta$ and $B(d')=B'$.
If $\delta>2i+1$, then we insert $d'$ into $H$.
If $\delta=2i+1$, since $dis(d)=2i+1$, $d'$ must be
illuminated directly by the beams in $B_r(d)$ and the beams of $B(d')$ are still towards right.
By Observation \ref{obser:20}, $B'\neq\emptyset$.
Hence we obtain $B_r(d')=B'\neq\emptyset$ (we set $B_r(d')$ to $B'$
because the beams of $B'$ are rightward).  Finally, we insert $d'$ into $H_R$.

\item
If $dis(d')<2i+1$, then the global-key of $d'$ is smaller than that of $d^*$ because $dis(d^*)=2i+1$. By the first algorithm invariant, $d'$ has been
correctly labeled. Recall that the direction of $B(d')$ is towards the inside of $\calC$.
Also by the first algorithm invariant, $d$ is
correctly labeled, and the direction of the beam set of $B_r(d)$ is
towards the inside of $\calC$. This means we have computed complete information on the
two doors of $\calC$ for the min-link v-v-paths from $s$ to the points
inside $\calC$. Hence, we can do a ``post-processing'' step to
compute the v-v-map in $\calC$ by using the beams of $B_r(d)$ and
$B(d')$. We will give a {\em corridor-post-processing} algorithm for this step later in Section \ref{sec:algocorridor}.

\item
If $dis(d')=2i+1$, then $d'$ has been labeled in the current phase.

If $B(d')\neq \emptyset$, then the global-key of $d'$ is the same as that of $d^*$. By
the first algorithm invariant, $d'$ has been correctly labeled. Then, as
above, since both $d$ and $d'$ have been correctly
labeled, we do a ``post-processing'' to compute the
v-v-map in $\calC$ using $B_r(d)$ and $B(d')$.

If $B(d')= \emptyset$, then the global-key of $d'$ is strictly larger than that of
$d^*$. By the second algorithm invariant, $d'$ is already in $H$.
If $\delta>2i+1$, then we do nothing.
If $\delta=2i+1$, then as in the above first case, we set $B_r(d')=B'\neq\emptyset$;
finally, we insert $d'$ into $H_R$ and remove $d'$ from $H$.

\item
The remaining case is when $dis(d')\neq \infty$ and $dis(d')>2i+1$.
We claim that this case can never happen.

Indeed, assume to the contrary that this case happens. Recall that the beams of $B(d')$ are towards the inside of the corridor $\calC$. Let $C$ be the junction cell that contains $d'$. Without loss of generality, assume $d'$ is on the left side of $C$.
Clearly, $d'$ got labeled after some diagonal $d''$ in $C$ was processed.
Since the beams of $d''$ that illuminate $d'$ must be towards the cell $C$, they
are from the corridor that is bounded by $d''$. By Observation
\ref{obser:20}, the beam set of $d''$ is not empty. Hence, it must be
the case that $dis(d'')=dis(d')>2i+1$. Since we use the heap $H$ to guide the main algorithm and we are currently processing the diagonal $d$ with $dis(d)=2i+1$, all diagonals of junction cells that have been processed must have distance values at most $2i+1$. However, the above shows that the diagonal $d''$
has been processed with $dis(d'')>2i+1$, incurring contradiction. 
Therefore, the case where $dis(d')\neq \infty$ and
$dis(d')>2i+1$ cannot happen.
\end{enumerate}

The above finishes the processing of the diagonal $d$. The
right-sweep procedure is done after the heap $H_R$ becomes empty.
Afterwards we do the left-sweep from the diagonals of $S_L$ using the heap $H_L$ in the
symmetric way. We omit the details.

This finishes our algorithm in the case in which $B(d^*)$ is not empty
for the diagonal $d^*$ at the root of $H$. We briefly discuss that
after the above processing our algorithm invariants still hold.

Indeed, before the diagonals of $S_R$ are processed, if we were running
the DN algorithm, at some moment during the $i$-th phase, the
diagonals of $S_R$ would be labeled the same as in our algorithm. Our algorithm
processes the diagonals of $S_R$ in a way consistent with the DN
algorithm. Based on our previous discussion on how the sweep procedures of the DN
algorithm can sweep the corridors, after all diagonals of $S_R$ are
processed, the diagonals in $H$ with the smallest global-key must have been
correctly labeled because if we were running the DN algorithm, at some
moment the algorithm would label them in the same way. Also, since we
process diagonals in the order of their global-keys, any
diagonal that has global-key smaller than that of the root of $H$ must have
been processed and labeled correctly. Hence, the first algorithm
invariant follows. For the second invariant, whenever a diagonal has
its distance value set to non-infinity for the first time, it is
always inserted into $H$.
For the third invariant, as discussed in
the algorithm description, a corridor $\calC$ is processed only if a
door $d$ of $\calC$ is processed and beams of $B(d)$ are towards
$\calC$. Although $d$ may be processed in the algorithm twice, it is
processed only once in either the right-sweep or the left-sweep.
Hence, with beams towards the inside of $\calC$, $d$ is only
processed once in the entire algorithm, implying that $\calC$ is
processed only once in the entire algorithm in the direction from $d$ to
the other door of $\calC$.

\subsubsection{$B(d^*)=\emptyset$}

In the sequel, we discuss the case where $B(d^*)=\emptyset$.
The algorithm is simpler in this case.  First,
we find the set $S$ of diagonals in $H$ that have the same global-key as $d^*$, which can be done in $O(|S|\log |H|)$ time by keeping doing the extract-min operations (i.e., all diagonals of $S$ are removed from $H$). We also let $S$ contain $d^*$. According to our way of comparing global-keys, all
diagonals of $S$ have empty beam sets. If we were running the DN
algorithm, the diagonals of $S$ would be locally-outmost
and we would be about to start the $(i+1)$-th phase (not the $i$-th phase).
As in the previous case, we run the two sweep procedures
starting from the diagonals of $S$.
Let $S_R$ (resp., $S_L$) be the subset of diagonals of $S$ whose beam directions
are rightward (resp., leftward).  We build a min-heap $H_R$ (resp., $H_L$)
on the diagonals of $S_R$ (resp., $S_L$).
Below, we only discuss the right-sweep since the left-sweep is similar.

Since we are doing the right-sweep in the $(i+1)$-th phase, each
diagonal of $S_R$ will generate a beam from the entire diagonal, and
all new diagonals illuminated in the right-sweep will
get distance value $2i+3$ instead of $2i+1$.
From now on, we associate each diagonal of $S_R$ with the beam, i.e.,
for each $d\in S_R$, $B_r(d)$ now consists of the only beam on $d$ although it was
empty in the $i$-th phase. Since each diagonal of $S_R$ originally got an empty beam set (at the end of the $i$-th phase), by Observation \ref{obser:20}, the beam of $B_r(d)$ cannot be towards a junction cell and thus it must be towards the inside of a corridor.

As long as $H_R$ is not empty, we repeatedly do the following.

We obtain the leftmost diagonal $d$ of $H_R$ (which is at the root) and remove it from $H_R$. Let $\calC$ denote the corridor that the beams of $B_r(d)$ enter.
We process the corridor $\calC$ using the beams of $B_r(d)$.
Again, one may assume we still use the DN algorithm to process $\calC$,
and later we will replace it by our corridor-processing
algorithm in Section \ref{sec:algocorridor}.

Let $\delta$ be the distance on the other door $d'$ obtained by the above
processing and let $B'$ be the corresponding beam set.
Let $dis(d')$ and $B(d')$ be the original distance value and beam set
at $d'$.
Again, by the third algorithm invariant, this is the the first
time $\calC$ is processed in the direction from $d$ to $d'$; hence,
if $dis(d')\neq \infty$, then $d'$ must be labeled by a sweep from outside $\calC$
and the beams of $B(d')$ must enter $\calC$.
Depending on the value of $dis(d')$, we may need to update the label of $d'$ in several cases.

\begin{enumerate}
\item
If $dis(d')=\infty$, we set $dis(d')=\delta$ and $B(d')=B'$. Note that
$\delta\geq 2i+3$. Hence, the global-key of $d'$ is strictly larger than that of $d^*$, which has distance value $2i+1$.
We insert $d'$ into $H$ (not $H_R$).


\item
If $dis(d')\leq 2i+1$, then since $B(d^*)=\emptyset$, the global-key of $d'$ is no larger than that of $d^*$ regardless of whether $B(d')$ is empty or not. By the first algorithm invariant, $d'$ has been correctly labeled. We do a ``post-processing'' to compute the
v-v-map in $\calC$ by using the beams of $B_r(d)$ and $B(d')$. Again, we will give a {\em corridor-post-processing} algorithm for this step later in Section \ref{sec:algocorridor}.

%
%
%
%

\item
The remaining case is when $dis(d')\neq \infty$ and $dis(d')>2i+1$.
By the same argument as in the previous case where
$B(d^*)\neq\emptyset$, this case cannot happen.
\end{enumerate}

The above describes the right-sweep procedure. The
left-sweep is similar.

This finishes our discussion in the case in which $B(d^*)$ is empty for
the diagonal $d^*$ at the root of $H$. As in the previous case, all
algorithm invariants hold.


The algorithm finishes if all three heaps $H$, $H_L$, and $H_R$ become
empty. After that, for each diagonal $d$ in a junction cell, $dis(d)$
and $B(d)$ have been correctly computed. During the algorithm
some corridors have been labeled correctly while others are left for
post-processing.

Specifically, consider any corridor $\calC$ and let $d_1$ and $d_2$ be
its two doors with their beam sets $B(d_1)$ and $B(d_2)$. If $\calC$ is not left for a post-processing, then $\calC$ has been processed either from $d_1$ to $d_2$ or from $d_2$ to $d_1$ and the v-v-map in $\calC$ has been computed after the processing. Suppose the above processing is from $d_1$ and $d_2$. Then, $\calC$ is processed using the beams of $B(d_1)$ and $B(d_2)$ is obtained after the processing.
We will show in Section
\ref{sec:algocorridor} that our corridor-processing algorithm on $\calC$ runs in $O(m+(h_1-h_2+1)\log
h_1)$ time, where $m$ is the number of vertices of $\calC$,
$h_1=|B(d_1)|$, and $h_2=|B(d_2)|$.
If $\calC$ is left for a post-processing, i.e., to compute the
v-v-map in $\calC$ by using $B(d_1)$ and $B(d_2)$, we will show in
Section \ref{sec:algocorridor} that our
corridor-post-processing algorithm runs in $O(m+h_1\log h_1+h_2\log
h_2)$ time.


\begin{lemma}
Given $\vtd(\calF)$, our algorithm computes the v-v-map on $\vtd(\calF)$ in $O(n+h\log h)$ time.
\end{lemma}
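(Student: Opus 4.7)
My plan has two parts: first, verify correctness by arguing that the algorithm faithfully simulates the DN algorithm on junction cells while delegating corridor work to the specialized subroutines of Section \ref{sec:algocorridor}, and second, bound the running time by charging work to junction cells, corridor vertices, and beam maintenance separately.

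For correctness, I would invoke the three algorithm invariants maintained during the algorithm description. Invariant 1 guarantees that whenever the root $d^*$ of $H$ is extracted, its global-key matches the value the DN algorithm would assign; hence the local sweeps launched from $S_R$ and $S_L$ faithfully resume the $i$-th DN phase from correct starting data. Invariant 2 guarantees that a junction-cell diagonal with a finite distance value is either already correctly labeled or present in $H$, and Invariant 3 guarantees that each corridor is processed in each direction at most once. When the heaps all become empty, every diagonal of every junction cell is correctly labeled, and each corridor is either labeled by its directional processing or handed off to a corridor-post-processing call whose door beams are both correct by Observation \ref{obser:20} and Invariant 1. This yields the complete v-v-map.

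For the running time, I would decompose the total cost into three pieces. First, the heap $H$ only stores diagonals of the $O(h)$ junction cells, and each such diagonal undergoes $O(1)$ heap operations (one insertion, $O(1)$ decrease-keys triggered by relabeling within its junction cell, and one removal), each costing $O(\log h)$; the auxiliary heaps $H_R$ and $H_L$ only hold diagonals with the current global-key and contribute the same order, for $O(h\log h)$ in total. Second, the $O(m)$ terms in the corridor-processing and corridor-post-processing running times sum to $O(n)$, since every vertex of $\calF$ lies in at most one corridor. Third, the beam-maintenance work inside corridors amounts to $\sum (h_1-h_2+1)\log h_1$ over processed corridors plus $\sum (h_1\log h_1 + h_2\log h_2)$ over post-processed corridors, taken across all corridors.

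The main obstacle is bounding the beam sums in the third piece. I would argue that beams on any corridor door are placed there by the processing of a diagonal in the adjacent junction cell, and by Invariants 1--3 each such diagonal is processed $O(1)$ times and installs $O(1)$ beam sets on neighboring doors. Charging each beam entering a corridor to the junction-cell processing event that created it bounds the total $\sum h_1$ over all corridor-processings by $O(h)$, and since every $h_1 \leq h$ we have $\log h_1 = O(\log h)$. The corridor-processing bound $(h_1-h_2+1)\log h_1$ then sums to $O(h\log h)$, and a parallel accounting over both doors handles the post-processing sum. Combining all three pieces yields the claimed $O(n+h\log h)$ bound.
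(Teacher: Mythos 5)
Your overall decomposition of the cost into junction-cell heap work, the $\sum m=O(n)$ corridor term, and beam maintenance matches the paper, and the correctness discussion (the three invariants plus Observation~\ref{obser:20}) is consistent with it. The gap is in the beam accounting. You charge ``each beam entering a corridor to the junction-cell processing event that created it'' and conclude that $\sum h_1$ over all corridor-processing calls is $O(h)$. This is not valid: a junction-cell processing event does not create beams one at a time, it propagates an entire beam set, and a single beam can survive a corridor (exit through the other door), cross a junction cell, and enter another corridor, so it contributes to $h_1$ for every corridor along its trajectory. In a chain of $\Theta(h)$ corridors traversed by $\Theta(h)$ surviving beams one would get $\sum h_1=\Theta(h^2)$. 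Indeed, if $\sum h_1$ were genuinely $O(h)$, the refined corridor-processing bound $O(m+(h_1-h_2+1)\log h_1)$ would be pointless --- it is designed precisely so that beams passing through a corridor without dying cost nothing.

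The missing ingredient is the paper's key claim that the total number of \emph{different} beams (beams with distinct generators) appearing at junction-cell diagonals over the whole algorithm is $O(h)$: because a corridor is a simple rectilinear polygon, processing it in one direction either passes the incoming beams through to the other door (possibly narrowed, with no new beam created) or terminates all of them and emits a single new beam; since each corridor is processed at most twice, it generates at most two new beams, for $O(h)$ beams overall. This claim does two jobs you need. First, it bounds $|B(d)|=O(h)$ for every junction-cell diagonal, so each beam operation there costs $O(\log h)$ rather than $O(\log n)$. Second, it lets one bound the corridor sums by charging to beam \emph{deaths} rather than beam entries: in a corridor-processing call at least $h_1-h_2$ beams terminate inside $\calC$, and in a post-processing call all $h_1+h_2$ do; since each beam can terminate at most once and there are only $O(h)$ beams, $\sum(h_1-h_2)$ and $\sum(h_1+h_2)$ are both $O(h)$, the extra $+1$ terms contribute $O(h)$ over the $O(h)$ corridor calls, and $\log h_1=O(\log h)$ then yields the $O(h\log h)$ term. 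Without the $O(h)$ bound on the total number of beams, neither your second nor your third piece goes through.
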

\begin{proof}
Recall that the beams in our algorithm are generated by
locally-outmost diagonals. We say that two beams are {\em different}
if they do not have the same generator. We say a diagonal is {\em generated}
by a corridor if the generator of the
diagonal is in the corridor.

We first prove a {\em claim} that the
number of different beams at the diagonals in all junction cells is $O(h)$. To see this, a key
observation is that since a corridor is a simple rectilinear polygon,
it can only generate at most two new beams that can
go out of the corridor in the entire algorithm. Specifically, consider a
corridor $\calC$ with doors $d_1$ and $d_2$. Suppose the algorithm
processes $\calC$ in the direction from $d_1$ to $d_2$. Then, as will be seen later in Section \ref{sec:algocorridor}, if some beams of $B(d_1)$ can directly illuminate $d_2$, then all beams of
$B(d_2)$ are from $B(d_1)$ (although some beams may become narrowed) and there is no new beam generated by $\calC$;
otherwise all beams of $B(d_1)$ terminate inside $\calC$ and
$B(d_2)$ will only have one beam coming out of the corridor through $d_2$ (this is because $\calC$ is a simple rectilinear polygon).
Since any corridor can be processed at most twice,
it can generate at most two new beams.
Since there are $O(h)$ corridors, the total number of beams on
the diagonals of junction cells is $O(h)$.

To obtain the running time of the algorithm, we analyze the
time we spent on junction cells and the corridors separately. Due to the above
claim, the size of $B(d)$ for each diagonal $d$ in any junction cell
is $O(h)$. Since there are $O(h)$ diagonals in all junction cells,
the total time we spent on processing them is $O(h\log h)$.

For the time we spent on all corridors, it is the sum of the time
of our corridor-processing algorithm and corridor-post-processing
algorithm on all corridors.

Let $S$ denote the set of the diagonals in all junction cells.
Define $\calC$, $d_1$, $d_2$, $B(d_1)$, $B(d_2)$, $m$, $h_1$, and $h_2$ the same as before. Suppose
$\calC$ has been post-processed. Then the
corridor-post-processing algorithm on $\calC$ takes
$O(m+h_1\log h_1+h_2\log h_2)$ time. The sum of the term $m$ overall all
corridors is $O(n)$. All beams of $B(d_1)$ and $B(d_2)$ are terminated
inside $\calC$ after the post-processing. Since the number of beams that are terminated inside corridors is no more than the number of different beams at the diagonals in all junction cells, by the above claim, the number of beams on the
diagonals of $S$ that are terminated inside corridors is $O(h)$. Hence, the sum of
$h_1$ (resp., $h_2$) over all corridors that have been post-processed
is $O(h)$, and the sum of $h_1\log h_1+h_2\log h_2$ over all
corridors that have been post-processed is $O(h\log h)$. Note that
each corridor can be post-processed at most once. This proves
that the total time of the corridor-post-processing algorithm in the
entire algorithm is $O(n+h\log h)$.

We can use the similar approach to analyze the total time of the
corridor-processing algorithm. Suppose $\calC$ has been processed by
the corridor-processing algorithm in
the direction from $d_1$ to $d_2$. Then, the running time of the
algorithm on $\calC$ is $O(m+(h_1-h_2+1)\log h_1)$ time. Similarly,
the sum of $m$ over all corridors is $O(n)$.
After the processing of $\calC$, the number of beams of $B(d_1)$ that
have been terminated in $\calC$ is at least $h_1-h_2$. Since the the number of beams on the
diagonals of $S$ that are terminated inside corridors is $O(h)$, the sum of the term
$h_1-h_2$ over all corridors that have been processed is $O(h)$. The
sum of the additional term $1$ over all corridors is clearly $O(h)$.
Therefore, although a corridor may be processed twice, the total time
of the corridor-processing algorithm in the entire algorithm is
$O(n+h\log h)$.

The lemma thus follows.\qed
\end{proof}

The above computes the v-v-map on $\vtd(\calF)$. Again, the above algorithm only labels diagonals.
Using the similar approaches as discussed in Section \ref{sec:labelcell},
we can also label cells and maintain path information within the same running time asymptotically. We omit the details.

The other three maps can be computed similarly. For computing the
h-v-map on $\vtd(\calF)$, one difference is on the initial steps, as follows. Initially, we let $s$ generate two beams that are two horizontal rays towards right and left, respectively. We set the distance value of $d_s$ to $0$, where $d_s$ is the vertical diagonal through $s$. Then, we consider $d_s$ as two duplicate diagonals associated with the above two beams respectively, and insert the two duplicate diagonals into the heap $H$. The remaining algorithm is the same as before except that we replace the distance values $2i+1$ and $2i+3$ in the algorithm description with $2i$ and $2i+2$, respectively. The h-h-map and v-h-map on $\htd(\calF)$ can be computed in symmetric ways.

We thus obtain the main result of this paper.

\begin{theorem}
Given a set of $h$ pairwise-disjoint rectilinear polygonal obstacles with a total of $n$ vertices in the plane, after the free space is triangulated,
we can construct a link distance map with respect to any given source point $s$ in $O(n+h\log h)$ time, such that
for any query point $t$, the link distance of a min-link rectilinear \st\ path $\pi$ can be computed in $O(\log n)$ time
and $\pi$ can be output in additional time linear in its link distance.
\end{theorem}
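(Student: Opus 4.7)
The plan is to assemble the theorem from the machinery developed in Sections~\ref{sec:mainscheme} and~\ref{sec:algocorridor}. The preceding lemma already gives us the v-v-map on $\vtd(\calF)$ in $O(n+h\log h)$ time, so the bulk of the work is to lift this single-map construction to a full link distance map that supports both the standard distance query and the path-retrieval query. First, I would invoke the lemma four times, once for each of the h-v, v-v, h-h, v-h maps. The h-v-map is built on $\vtd(\calF)$ with the only change described at the end of Section~\ref{sec:mainscheme}: $s$ emits two horizontal beams, $d_s$ is split into two duplicate diagonals inserted into $H$, and the phase distances are $2i$ and $2i+2$ instead of $2i+1$ and $2i+3$. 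The h-h and v-h maps are built on $\htd(\calF)$ by the symmetric algorithm (swap the roles of horizontal and vertical throughout). Each invocation inherits the $O(n+h\log h)$ time and $O(n)$ space bound from the lemma, so all four maps together still cost $O(n+h\log h)$.

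Next I would upgrade each map so that it labels cells, not only diagonals. Here I would reuse the ``fake diagonal'' trick from Section~\ref{sec:labelcell}: insert one fake vertical diagonal into each cell of $\vtd(\calF)$ (respectively horizontal into each cell of $\htd(\calF)$) and treat fake diagonals as ordinary ones in the main scheme. Because fake diagonals all lie in the interior of junction cells or of corridors, they only add an $O(n)$ term to the corridor sizes $m$ and to the junction-cell work, so the bound of the lemma absorbs them. Simultaneously, I would attach to every labeled diagonal the path-retrieval pointer described in Section~\ref{sec:labelcell}: a point of $d$ inside some incoming beam together with that beam's generator, plus, for locally-outmost diagonals, the beam that caused the turn. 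This bookkeeping is $O(1)$ per labeling event and so free within our budget.

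With all four labeled maps in hand, I would next install point-location structures of linear size over each of them in linear time using, for example, Kirkpatrick's or Edelsbrunner--Guibas--Stolfi structures cited earlier. Given a query point $t$, locate $t$ in each of the four maps in $O(\log n)$ time, read off the four distances $dis_{hh}(t), dis_{hv}(t), dis_{vh}(t), dis_{vv}(t)$ and return their minimum; the justification that this minimum equals the true min-link \st\ distance is exactly the observation in Section~\ref{sec:pre} that any rectilinear path can be classified by its first and last link types. To actually output the path, I would follow the stored generator pointers backward from the chosen map's region, which produces one link per step, hence time linear in the link distance of $\pi$.

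The main obstacle is not any single step above but making sure the time accounting still closes. In particular, one has to verify that adding $O(n)$ fake diagonals does not blow up either the corridor-processing or corridor-post-processing bounds $O(m+(h_1-h_2+1)\log h_1)$ and $O(m+h_1\log h_1+h_2\log h_2)$ used in the lemma; this is true because fake diagonals sit inside cells and contribute only to the $m$ term, never to the $h_1,h_2$ counts carried between doors. One also needs to confirm that the path-pointer bookkeeping is consistent across the interleaved left- and right-sweeps in Section~\ref{sec:main} and across the two directions in which a corridor may be processed, which is a straightforward case check parallel to the invariant-maintenance arguments already given. Once these points are checked, combining the four-map lemma invocations, the cell labeling, the retrieval pointers and the point-location wrappers yields exactly the $O(n+h\log h)$ preprocessing, $O(n)$ space, $O(\log n)$ distance query, and $O(\text{link distance})$ path output claimed in the theorem.
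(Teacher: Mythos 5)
Your proposal is correct and follows essentially the same route as the paper: invoke the v-v-map lemma four times (with the stated initialization changes for the h-v-map and the symmetric constructions on $\htd(\calF)$), label cells via the fake-diagonal trick of Section~\ref{sec:labelcell}, record generator/beam pointers for path retrieval, overlay linear-size point-location structures, and answer a query by taking the minimum over the four maps. Your added remark that fake diagonals only inflate the $m$ terms (and the $O(h)$ junction-cell count by a constant factor), never the door beam counts $h_1,h_2$, is precisely the accounting the paper leaves implicit.
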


\section{The Algorithms for Processing Corridors}
\label{sec:algocorridor}

In this section, we present our corridor-processing algorithm and
corridor-post-processing algorithm. In fact, as shown later, the second algorithm is
simply to run the first one twice from the two doors. Below, we discuss the
corridor-processing algorithm first.

%
%

Let $\calC$ be any corridor with two doors $d_1$ and $d_2$ and
their beam sets $B(d_1)$ and $B(d_2)$. Let $m$ be the number of vertices of $\calC$,
$h_1=|B(d_1)|$, and $h_2=|B(d_2)|$.
Suppose we
want to process $\calC$  to compute the labels of all diagonals,
by using the beams in $B(d_1)$ and the distance
value $dis(d_1)$ of $d_1$. In particular,
we want to obtain the beam set $B(d_2)$ and the distance value
$dis(d_2)$ for $d_2$.
Our algorithm is conceptually similar to the DN algorithm if we apply
it on $\calC$, but our approach is more efficient due to a better implementation by making
use of the simplicity of the corridor (one crucial property is that there will be no merge
operations on the beam sets).


Denote by $\vtd(\calC)$ the vertical decomposition of $\calC$ (e.g.,
see Fig.~\ref{fig:simDec}). Note that although $d_1$ and $d_2$ are
now on the boundary of $\calC$, we still consider them as diagonals.
Let $C(d_1)$ denote the cell of $\vtd(\calC)$ that contains $d_1$.
Starting from $C(d_1)$, we
propagate the beams from $B(d_1)$ to all other cells one by one until
all diagonals have been labeled.


\begin{figure}[t]
\begin{minipage}[t]{\linewidth}
\begin{center}
\includegraphics[totalheight=1.2in]{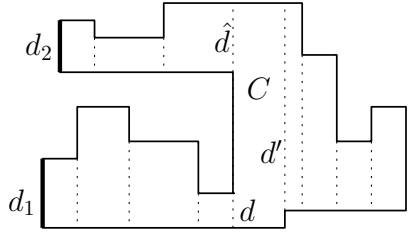}
\caption{\footnotesize Illustrating $\vtd(\calC)$: the two doors are
shown with thick segments.  }
\label{fig:simDec}
\end{center}
\end{minipage}
\vspace{-0.15in}
\end{figure}

Consider any diagonal $d$ of $\vtd(\calC)$ such that $d$ is not $d_1$
or $d_2$. Since $\calC$ is simply connected,
$d$ divides $\calC$ into two polygons, and we use $\calC(d)$ to refer
to the one that does not contain $d_1$. Clearly, for any point $t\in
\calC(d)$, any path from $d_1$ to $t$ must intersect $d$. Consider any
cell $C$ of $\vtd(\calC)$. The cell $C$ may have diagonals on both its
two vertical sides. It is not difficult to see that $C$ has one and only one diagonal $d$
such that $C$ is in $\calC(d)$, and we call that diagonal the {\em
entrance diagonal} of $C$. Other diagonals of $C$ are called {\em
exit diagonals} of $C$.  For example, in Fig.~\ref{fig:simDec}, $d$ is the
entrance diagonal of $C$, and $\hatd$ and $d'$ are exit
diagonals. Note that every diagonal
is an entrance diagonal of one and only one cell. In particular,
we consider $d_1$ as the entrance diagonal of $C(d_1)$.

A general step of our algorithm works as follows. Consider a cell $C$
and suppose the entrance diagonal $d$ of $C$ has been labeled, i.e.,
$dis(d)$ and $B(d)$ are available (and the beams of $B(d)$ are stored
in a balanced binary search tree $T(d)$).
Initially, $C$ is the cell $C(d_1)$ and $d$ is $d_1$.
Our goal for processing $d$ is to label
all exit diagonals in $C$.

Denote by $e_l$ and $e_r$ the
left and right sides of $C$, respectively. Without loss of generality, we assume $d$
is on $e_l$ (e.g., see Fig.~\ref{fig:simDec}).
Thus, the beams of $B(d)$ are rightward.
The distance values of the exit diagonals are easy to compute. If $B(d)$
is empty, then we need to generate a single beam from the entire $d$ and
every exit diagonal of $C$ obtains the distance value $dis(d)+2$; otherwise,
every exit diagonal of $C$ obtains the distance value $dis(d)$.
Below, we focus on computing the beam sets of the exit
diagonals.

Suppose $e_l$ has another diagonal $\hatd$. Then, regardless of
whether $B(d)$ is empty, we set $B(d)=\emptyset$ (and $T(d)=\emptyset$), which
can be done in only constant time.

Now consider any diagonal $d'$ on $e_r$.  First, we want to compute $B(d)\cap
d'$, i.e., the portions of the beams of $B(d)$ that can illuminate $d'$, and set $B(d')=B(d)\cap d'$. If $B(d)=\emptyset$, then since we generate a single beam from
$d$, $B(d')=B(d)\cap d'$ has at most one beam
and can be computed in constant time.

In the sequel we consider the case where
$B(d)\neq \emptyset$. Our main effort is on handling this case.
Recall that the beams of $B(d)$ are stored in a
balanced binary search tree $T(d)$. We add two special pointers to
$T(d)$ that point to
the lowest beam and the highest beam of $B(d)$ respectively so that we
can access these beams in constant time. Note that with these special
pointers, we can still perform the previous operations on
$T(d)$ each in logarithmic time.

Depending on whether $e_r$ has one or two diagonals, there are two
cases.

If $e_r$ has only one diagonal $d'$, let $e$ be the obstacle edge on $e_r$ such that $d'$
is the vertical extension of $e$ (e.g., see Fig.~\ref{fig:case10}). Clearly, $e_r$ is the union of $d'$ and $e$.
Without loss of generality, assume $e$ is lower than $d'$.

Suppose $b$ is a beam in $B(d)$ and $b'$ is the rightward
projection of $b$ on $e_r$. For any line segment $l$ on $e_r$,
we say that $b$ {\em intersects} $l$ if $b'$
intersects $l$ properly, and $b$ {\em fully intersects} $l$ if $b'$ is
contained in $l$.

Below we let $b$ denote the lowest beam of
$B(d)$, which can be obtained in constant time by using the special
pointers on $T(d)$.
We first check whether $b$ intersects $e$.
Depending on how $b$ intersects $e$, there are three cases. The correctness of our setting in all these cases is based on that $e_r$ consists of $d'$ and $e$ from top to bottom (e.g., see Fig.~\ref{fig:case10}).

\begin{enumerate}
\item
If $b$ does not intersect $e$, then every beam of $B(d)$ fully
intersects $d'$. Thus we have $B(d')=B(d)$ and $T(d')=T(d)$. This can be done in constant time.

\item
If $b$ intersects $e$ but does not fully intersects $e$, then we
set $B(d')=B(d)$ and $T(d')=T(d)$, but we also change $b$'s length to that of its
portion intersecting $d'$.
This can be done in constant time.

\item
If $b$ fully intersects $e$ (so $b$ ``terminates'' at $e$), then depending on whether $B(d)$ has only one beam, there are further two subcases.

If $B(d)$ has only one beam, which is $b$, then we simply set $B(d')=\emptyset$ and $T(d')=\emptyset$ since $b$ terminates at $e$.

Otherwise, we have $|B(d)|\geq 2$. In this subcase, as before we use a split
operation on $T(d)$ to obtain $T(d')$ and $B(d')=B(d)\cap d'$ in $O(\log |B(d)|)$ time.
Note that since $b$ terminates at $e$, $b$ is not in $B(d')$, and thus
$|B(d')|\leq |B(d)|-1$. This will be useful to our time analysis on the split operations.
\end{enumerate}

\begin{figure}[t]
\begin{minipage}[t]{0.49\linewidth}
\begin{center}
\includegraphics[totalheight=1.2in]{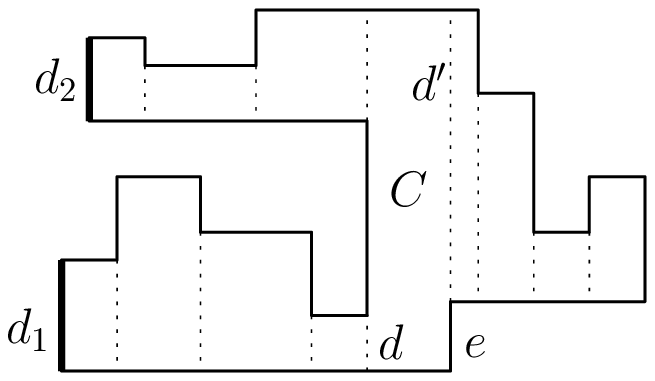}
\caption{\footnotesize The right side $e_r$ of $C$ consists of $d'$ and $e$ from top to bottom.}
\label{fig:case10}
\end{center}
\end{minipage}
\hspace*{0.02in}
\begin{minipage}[t]{0.49\linewidth}
\begin{center}
\includegraphics[totalheight=1.2in]{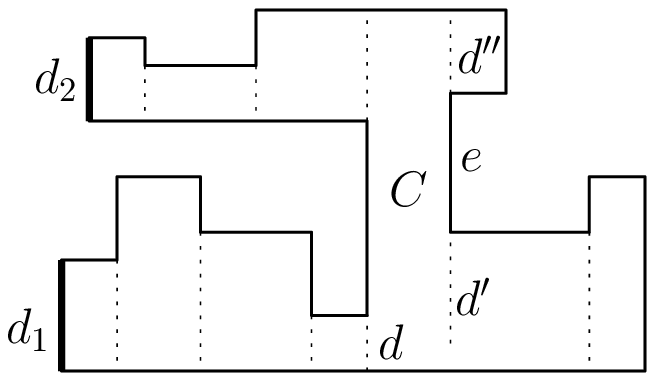}
\caption{\footnotesize The right side $e_r$ of $C$ consists of $d''$, $e$, and $d'$ from top to bottom.  }
\label{fig:case20}
\end{center}
\end{minipage}
\vspace{-0.15in}
\end{figure}

If $e_r$ has another diagonal $d''$,
without loss of generality, we assume $d'$ is the lower one.
Let $e$ be the obstacle edge on $e_r$. Then $e_r$  consists of $d''$,
$e$, and $d'$ from top to bottom (e.g., see Fig.~\ref{fig:case20}).

Again, let $b$ denote the lowest beam of $B(d)$. Depending on how $b$
intersects $d'$, there are three cases. In every case below, we will also obtain a tree $T'$
of beams that will be used later to compute the beams of $d''$.

\begin{enumerate}
\item
If $b$ does not intersect $d'$, then we have $B(d')=\emptyset$ and
$T(d)=\emptyset$. We set $T'=T(d)$.

\item
If $b$ intersects $d'$ but does not fully intersects $d'$, then $B(d')$ consists of a
single beam that is the portion of $b$ intersecting $d'$.
It is straightforward to construct $T(d')$.
We set $T'=T(d)$ but also change the length of $b$ to that of the portion of $b$
not intersecting $d'$. All above can be done in constant time.

\item
If $b$ fully intersects $d'$, then we further check whether the
{\em highest} beam $b^*$ of $B(d)$ intersects $d'$.

If $b^*$ fully intersects $d'$, then we have $B(d')=B(d)$ and
$T(d')=T(d)$. Also, $T'=\emptyset$.

If $b^*$ intersects $d'$ but does not fully intersect $d'$, then we
set $B(d')=B(d)$ and $T(d')=T(d)$ but also change $b^*$'s length to
that of its portion intersecting $d'$. Also, we let $T'$  only
include the portion of $b^*$ not intersecting $d'$. All this can be done in constant time.

If $b^*$ does not intersect $d'$,
then as before we use a split operation that split
$T(d)$ into two trees: $T(d')$, which consists of the beams of
$B(d')=B(d)\cap d'$, and $T'$,
which consists of the rest of the beams of $B(d)$. This can be done in $O(\log |B(d)|)$ time.
Let $B'$ be the set of beams in $T'$.
Note that since $b$ fully intersects $d'$, $b$
is not in $B'$, and since $b^*$ does not intersect $d'$, $b^*$ is
not in $B(d')$. Hence, we have $|B'|\leq |B(d)|-1$ and $|B(d')|\leq |B(d)|-1$, which will be useful to our time analysis on the split operations. Further, since $b$ fully intersects $d'$ but $b^*$ does not intersect $d'$, we have $b'\neq b^*$ in this case, implying that $|B(d)|\geq 2$.
\end{enumerate}

Next, we compute $B(d'')$ and $T(d'')$ from the beams in the tree $T'$, in
the same way as the previous case where $e_r$ only contains one
diagonal. Namely, we first check how the lowest beam of $T'$
intersects $e$ and then proceed accordingly for the three cases.

The above labels all exit diagonals of the cell $C$. Based on our above algorithm, an easy observation is that if $|B(d)|=1$, then $|B(d')|\leq 1$ for each exit diagonal $d'$ of $C$.

Next, from each exit diagonal of $C$,
we proceed with the same procedure. The algorithm is done once
all diagonals have been labeled. The following lemma analyzes the running time of the algorithm.

\begin{lemma}
Our corridor-processing algorithm on $\calC$ runs in $O(m+(h_1-h_2+1)\log h_1)$ time.
\end{lemma}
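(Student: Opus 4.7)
My plan is to decompose the running time into per-cell non-split work and BBST split work. Each cell of $\vtd(\calC)$ is visited at most once, and $\vtd(\calC)$ has $O(m)$ cells. At each cell the algorithm inspects only the lowest beam $b$ and the highest beam $b^*$ of $T(d)$ via the maintained special pointers, then dispatches to one of a constant number of subcases; all subcases other than two are resolved in $O(1)$ time by pointer reassignment, length modification, emptying, or singleton construction. This gives the $O(m)$ term in the bound.

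The two expensive subcases are (i) the one-diagonal case with $b$ fully intersecting $e$ and $|B(d)|\geq 2$, where $T(d)$ is split to separate beams terminating at $e$ from those continuing to $d'$; and (ii) the two-diagonal case with $b$ fully intersecting $d'$ but $b^*$ not intersecting $d'$, where $T(d)$ is split between $T(d')$ and the tree $T'$ used to propagate to $d''$. The subsequent processing of $T'$ in the two-diagonal case is itself a one-diagonal-style invocation that may trigger one further type-(i) split. Each split costs $O(\log h_1)$.

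The heart of the proof is to bound the total number of splits by $O(h_1-h_2+1)$ via a charging argument that exploits the simple-polygon structure of $\calC$. A type-(i) split forces the beam $b$ to terminate at $e$, so I charge such a split to that disappearing beam. A type-(ii) split partitions one stream of beams into two non-empty sub-streams destined for the two exits. Because $\calC$ is simply connected, the dual graph of $\vtd(\calC)$ is a tree and $d_2$ lies in a unique cell, so at most one of the two sub-streams can lie on the path to $d_2$; the other is confined to a sub-tree disjoint from $d_2$, forcing every beam inside it to eventually terminate inside $\calC$, either by a later type-(i) split or by reaching a leaf cell with no exits. Charge each type-(ii) split to a distinct such committed beam. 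In the stream-tree whose internal nodes are type-(ii) splits and whose leaves are either $d_2$ or terminating streams, distinct internal nodes can be charged to distinct terminating leaves, making the charge injective.

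The main obstacle will be the bookkeeping for ``narrowings'' in case 2 subcase 2, case 2 subcase 3.2, and case 2 subcase 3.3 when a middle beam straddles $d'$'s boundary: each such event cuts a single beam into two pieces, inflating the total piece count by one without any physical termination, so a naive charging of splits against terminations is corrupted. My plan to resolve this is to augment the charging with an amortized potential that tracks the total piece count on currently active diagonals and verify, subcase by subcase, that every inflation is eventually absorbed—either the extra piece reaches $d_2$ (saturating $h_2$), or it itself terminates inside $\calC$ and can absorb a later charge. The combination yields $T_1 + T_2 = O(h_1-h_2+1)$, and hence a total split cost of $O((h_1-h_2+1)\log h_1)$, which together with the $O(m)$ non-split time establishes the lemma.
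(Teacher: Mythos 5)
Your proposal is correct and follows essentially the same route as the paper: charge $O(m)$ for all constant-time per-cell work, $O(\log h_1)$ per split, and bound the number of splits by $O(h_1-h_2+1)$ by charging each split against beams that terminate inside $\calC$ (the paper phrases this as: every split is performed on a set whose beams come from $B(d_1)$ and produces sets of size at most $|B(d)|-1$ each). The paper's own justification of the split count is in fact terser than yours---it states the size-decrement property and says the bound ``can be verified''---so your explicit charging over the dual tree of $\vtd(\calC)$ and your attention to the constant-time narrowings that do not count as splits only fill in details the paper omits.
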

\begin{proof}
We make a few
observations on our algorithm. First,
for any diagonal $d$, if computing $B(d)$ does not need a split
operation, then it takes only constant time to do so.
Second, for any diagonal $d$ of $\calC$, $|B(d)|\leq |B(d_1)|=h_1$
always hold, and if $|B(d)|\geq 2$, then the beams of $B(d)$ are from
$B(d_1)$. Third, for any diagonal $d$, if $B(d)$ is obtained by a split
operation on a beam set $B$, then $|B|\geq 2$ and $|B(d)|\leq |B|-1$.

The first two observations imply that if there are $k$ split
operations in the entire algorithm, then the total running
time is $O(m+k\log h_1)$. We claim that $k=O(h_1-h_2+1)$.

Indeed, by the second and the third observation, if a split operation is performed on $B(d)$ for some diagonal $d$, then all beams of $B(d)$ must be originally from $B(d_1)$, and further, the split operation splits $B(d_1)$ into two sets such that the number of beams in each set is at most $|B(d)|-1$. Therefore, it can be verified that if $h_2\geq 2$, then $k=O(h_1-h_2)$, and otherwise, $k=O(h_1)$. In either case, $k=O(h_1-h_2+1)$.

We conclude that the algorithm runs in $O(m+(h_1-h_2+1)\log h_1)$ time.\qed
\end{proof}

Next, we present our corridor-post-processing algorithm. Again,
consider the corridor $\calC$ as above, but now $B(d_2)$ and
$dis(d_2)$ are also given as input and the beams of $B(d_2)$ are towards
the inside of $\calC$.
Our goal is to compute the distance values for
all diagonals of $\calC$ by using the beams of $B(d_1)$ and $B(d_2)$.
An easy solution is to use our above corridor-processing algorithm to process
$\calC$ twice, once only using $dis(d_1)$ and $B(d_1)$ and once only using
$dis(d_2)$ and $B(d_2)$. Then, each diagonal $d$ has
been labeled twice, and we finally set $dis(d)$ to the smaller distance value
labeled above. Clearly, the total running time is bounded by
$O(m+h_1\log h_1+h_2\log h_2)$.

\section{Conclusions}
\label{sec:conclude}

We presented a new algorithm for computing minimum
link paths in rectilinear polygonal domains. The algorithm
matches the $\Omega(n+h\log h)$ time lower bound \cite{ref:MaheshwariLi00}
if the triangulation of the free space can be done optimally in $O(n+h\log h)$ time.
Our algorithm can also build the minimum-link distance map for a fixed source point to answer minimum link
path queries.

It would be interesting to see if our approach can be extended to work in a $C$-oriented world \cite{ref:AdegeestMi94,ref:HershbergerCo94,ref:MitchellMi14}; one main stumbling block is that the DN algorithm does not generalize to $C>2$ orientations. Another intriguing open problem is whether a result similar to ours, could be possible for paths and domains with unrestricted orientations \cite{ref:MitchellMi92}, e.g., whether there exists an $O(g(n)+f(h))$-time algorithm for the problem where $g$ is subquadratic (the 3SUM-hardness reduction \cite{ref:MitchellMi14} for the problem shows that $f$ essentially cannot be subquadratic).

%



\bibliographystyle{plain}
\bibliography{reference}

%



\end{document}